\DeclareMathAccent{\wtilde}{\mathord}{largesymbols}{"65}
\DeclareMathAccent{\what}{\mathord}{largesymbols}{"62}
\newcommand\cS{{\mathcal S}}
\newcommand\Z{{\mathbb Z}}
\newcommand\bbbr{{\mathbb R}}
\newcommand\cH{{\mathcal H}}
\newcommand\cL{{\mathcal L}}
\newcommand\cA{{\mathcal A}}
\newcommand\cB{{\mathcal B}}
\newcommand\cU{{\mathcal U}}
\def\bbbc{{\mathbb C}}
\renewcommand\i{{\iota}}
\newcommand\id{\mbox{\rm id}}
\def\acS{{\, \xrightarrow{\cS}\, }}
\newcommand\bx{{\bf x}}
\newcommand\by{{\bf y}}
\newcommand\bX{{\bf X}}
\newcommand\bY{{\bf Y}}
\newcommand\bz{{\bf z}}
\newcommand\bZ{{\bf Z}}
\newcommand\bF{{\bf F}}
\newcommand\bG{{\bf G}}
\newcommand\bbf{{\bf f}}
\newcommand\bbg{{\bf g}}
\newcommand\bsu{{\boldsymbol u}}
\newcommand\bsv{{\boldsymbol v}}
\newcommand\bsalpha{{\boldsymbol \alpha}}
\newtheorem{Def}{Definition}
\newtheorem{The}{Theorem}
\newtheorem{Pro}{Proposition}
\newtheorem{Lem}{Lemma}
\newtheorem{Cor}{Corollary}
\def\wb{\accentset{{\cc@style\underline{\mskip10mu}}}}
\begin{document}
\title{Darboux transformation for the vector sine-Gordon equation and 
integrable equations on a sphere}
\author{ Alexander V. Mikhailov$^{\star}$, Georgios Papamikos$^{\dagger}$ and Jing Ping Wang$ ^\dagger $
\\
$\dagger$ School of Mathematics, Statistics \& Actuarial Science, University of Kent, Canterbury, UK \\
$\star$ School of Mathematics, University of Leeds, Leeds, UK
}
\date{}
\maketitle

\begin{abstract}
We propose a method for construction of  Darboux transformations, which is a 
new development of the dressing method for 
 Lax operators invariant under a reduction group. We apply the method to
the vector sine-Gordon equation and derive its B{\"a}cklund transformations.  We
show that there is a new Lax 
operator  canonically associated with our Darboux transformation resulting 
an evolutionary 
differential-difference system on a sphere. The latter is a generalised symmetry 
for the chain of B\"acklund transformations. Using the 
re-factorisation approach and the Bianchi permutability of the Darboux 
transformations we derive new vector Yang-Baxter map and integrable discrete 
vector sine-Gordon equation on a sphere.
\end{abstract}

\section{Introduction}

 Lax integrable partial differential equations have natural 
connections with integrable differential-difference and discrete equations via 
Darboux transformations. Namely, given a Lax representation for a partial 
differential equation, we can systematically construct Darboux transformations 
whose Bianchi permutability condition leads to an integrable difference 
equation, while the corresponding B\"acklund transformations are (often 
nonlocal) symmetries of these difference equations and are integrable 
differential--difference equations in their own right.

In this paper we make steps towards the development of a systematic approach to 
construction of Darboux transformations, associated difference systems 
and integrable maps based on a natural extension of the dressing method. 
The motivation for this line of research is to cooperate the reduction groups 
\cite{mik79,mik80,mik81} of Lax representations of
integrable partial differential equations to integrable 
difference  equations 
and corresponding Darboux transformations. The aim of our project is
to describe all elementary Darboux transformations for Lax operators on  
Kac-Moody
algebras
and automorphic Lie algebras with finite reduction groups. Recently, 
the authors of \cite{smx, krm13} have completed a comprehensive study for 
the Lax operators of the nonlinear Schr{\"o}dinger equation type and they 
derived some new discrete equations and new Yang-Baxter maps.
In the paper \cite{MPW} the method was applied to the case of Lax operators 
with  the reduction group generated by inner and outer automorphisms 
of the Lax representation for two dimensional Volterra chain.

In this paper we study differential-difference and partial difference equations 
associated with the vector sine-Gordon equation
\begin{equation}\label{vsG}
 D_t \left(\frac{D_x{\boldsymbol \alpha}}{\beta}\right)={\boldsymbol \alpha}, 
\quad 
\beta^2+|\boldsymbol \alpha|^2=1,
\end{equation}
where dependent variables ${\boldsymbol \alpha}=(\alpha^1, \cdots, \alpha^n)^T$ 
is $n$-dimensional 
real vector and $\beta\in\bbbr$. Here and in what follows the upper index
$T$ denotes the transposition of a vector or a matrix and we denote 
${\boldsymbol
\alpha}^T {\boldsymbol \alpha}=|{\boldsymbol \alpha}|^2$. Sometimes it is 
convenient to write $\beta$ and ${\boldsymbol \alpha}$
as an $n+1$-dimensional unit vector ${\boldsymbol v}^T= 
\left(\begin{array}{cc}\beta, & {\boldsymbol \alpha}^T \end{array}\right) $.
Thus the vector sine-Gordon equation is an integrable partial 
differential system of equations with the vector dependent variable $\bsv$ is 
on a sphere  
$\bsv\in{ S}^{n}=\{\boldsymbol{x}\in\bbbr^{n+1}\, |\, 
 \boldsymbol{x}^T\boldsymbol{x}=1\}$.
 
Equation (\ref{vsG}) has a long history, it  first appeared in \cite{PoR79} 
viewed as a reduction of
the two-dimensional  $\sigma$-model \cite{mr80c:81115}.  Its 
Lax representation was
given in \cite{EP79} and its Lagrangian formulation in \cite{bps95}. Later,
this equation reappeared in the study of moving frames in the Riemannian 
geometry \cite{wang02}. The dressing method for construction of multi-soliton 
solutions of (\ref{vsG}) and study of soliton interactions has been recently 
developed in \cite{MPW15}. 

In this paper we use the Lax representation 
(\ref{lax}), (\ref{JUV}) proposed in \cite{wang02} and studied in \cite{MPW15}.
For these Lax operators we derive a Darboux transformation (Theorem 
\ref{th1}) with Darboux matrix
\begin{equation}\label{darm}
 M_\nu(\bsu)=I_{n+2}+\frac{ i \nu}{\lambda-i\nu}Q a\rangle \langle a
-\frac{ i \nu}{\lambda+i\nu} a\rangle \langle a Q,\quad
\langle a=(i,\bsu^T),\quad \bsu=
\frac{{\boldsymbol v}_1+{\boldsymbol v}}{|{\boldsymbol
v}_1+{\boldsymbol v}|},
\end{equation}
where  $I_{n+2}$ is the identity matrix of size $n+2$, 
$Q=\mbox{diag}(-1,1,\ldots,1)$, and 
rigorously prove that 
\begin{eqnarray}
&&D_t\left(\frac{{\boldsymbol v}_1+{\boldsymbol v}}
{|{\boldsymbol v}_1+{\boldsymbol v}|}\right)
=-\frac{1}{2 \nu}({\boldsymbol v}_1-{\boldsymbol v});
\label{1}\\
&&\frac{D_x \boldsymbol \alpha_1}{\beta_1}-\frac{D_x\boldsymbol 
\alpha}{\beta}=-\frac{2 \nu}{|{\boldsymbol
v}_1+{\boldsymbol v}|}\left({\boldsymbol 
\alpha}_1+{\boldsymbol \alpha} \right),\label{2}
\end{eqnarray}
 is the B\"acklund transformation which depends on an arbitrary constant 
parameter  $0\ne \nu\in \mathbb{R}$ and relates two solutions
$\bsv$ 
and $\bsv_1$ of the vector sine-Gordon equation (\ref{vsG}). Equations 
(\ref{1}) and (\ref{2}) can also be seen as two non-evolutionary 
differential--difference equations on a lattice with variables 
$\bsv_j\in S^n,\ j\in\Z$ and with a shift operator $\cS_\nu: \bsv_{j}\mapsto 
\bsv_{j+1}$, assuming $\bsv_0=\bsv$.

For the Darboux matrix (\ref{darm}) there exists a new Lax operator
\begin{equation}\label{cB}
 \cB_\nu=D_{\tau }-i\nu\frac{Qa_{-1}\rangle\langle 
a-a\rangle\langle a_{-1}Q}{(\lambda-i\nu)\langle 
aQa_{-1}\rangle}+i\nu\frac{a_{-1}\rangle\langle 
aQ-Qa\rangle\langle a_{-1}}{(\lambda+i\nu)\langle aQa_{-1}\rangle}, \qquad
a_{-1}\rangle=\cS_\nu^{-1} (a)\rangle
\end{equation}
canonically associated with it. The existence of 
a Lax operator which is canonically associated with a Darboux matrix is quite 
remarkable. We first observed it in \cite{MPW} where a local generalised 
symmetry for non-evolutionary  differential--difference equations (B\"acklund 
transformations for two--dimensional Volterra chain) has been found. Our 
analysis of the variety of Darboux--Lax integrable differential--difference 
systems shows that local symmetries often correspond to such operators. They 
deserve more attention and we are planning to develop this direction of 
research in future publications. 

The compatibility of the Darboux  matrix 
(\ref{darm}) and Lax operator (\ref{cB}) also results in an evolutionary 
differential--difference equation
\begin{eqnarray}\label{3}
D_{\tau }(\boldsymbol v) =\frac{|\boldsymbol v_{-1}+\boldsymbol v|^2 
(\boldsymbol 
v_1+\boldsymbol v)-|\boldsymbol v_1+\boldsymbol v|^2
(\boldsymbol v_{-1}+\boldsymbol v)}{( \boldsymbol v_1^T+\boldsymbol v^T)( 
\boldsymbol v_{-1}+\boldsymbol v )+|\boldsymbol
v_{-1}+\boldsymbol v||\boldsymbol v_1+\boldsymbol v|},\qquad \bsv_k =\cS_\nu^k 
(\bsv),
\end{eqnarray}
which is a local symmetry of the nonevolutionary integrable systems (\ref{1}) 
and (\ref{2}). System (\ref{3}) is known, and it  was  found by Adler 
\cite{Adler08a}  in his classification of isotropic 
integrable Volterra-type lattices on the sphere with generalised symmetries. 
In this paper we equip this system with a Lax-Darboux representation 
and connect it  to the vector sine-Gordon equation (\ref{vsG}) and its
B\"acklund chains (\ref{1}) and (\ref{2}).

The Bianchi permutability condition for two Darboux transformations with
distinct parameters $\mu\ne\pm\nu$ resulting in two shift operators 
$\cS_\nu,\ \cS_\mu$  leads to
the integrable discrete equation
\begin{equation}\label{diffeq0}
 \bsv_{1,0}=-\bsv+2\bX (\bX^T,\bsv)
\end{equation}
where  $\cS_{\nu}^n\cS_{\mu}^m \bsv=\bsv_{n,m}$ and   
\begin{equation*}
\bX =\frac{(\nu^2-\mu^2) \bx+2\mu(\nu+\mu (\bx^T\by))\by}{\nu^2+\mu^2+2\mu\nu 
(\bx^T\by)} , \qquad
\bx=\frac{\bsv_{1,1}+\bsv_{0,1}}{|\bsv_{1,1}+\bsv_{0,1}|},\quad  
\by=\frac{\bsv_{0,1}+\bsv}{|\bsv_{0,1}+\bsv|}.
 \end{equation*}  
 Equation (\ref{diffeq0}) can also be uniquely resolved with respect to the 
variable $\bsv_{0,1}$.  B\"acklund transformations (\ref{1}), (\ref{2}) and 
evolutionary equation 
(\ref{3}) are its non-local and local symmetries.

Moreover, the re-factorisation of the product of two Darboux matrices 
$M_\nu(  \bx)M_\mu(  \by)=M_\mu(  \bY)M_\nu(  
\bX)$  leads to a new 
involutive Yang-Baxter map $R_{\nu,\mu}:S^n\times S^n\mapsto S^n\times S^n$
\begin{equation}\begin{split}
 R_{\nu,\mu}: \left(\begin{array}{c}
              \bx\\ \by
              \end{array}\right)\ \mapsto\ \left(\begin{array}{c}
              \bX
              \\ \bY
              \end{array}\right)=\left(\begin{array}{c}
              \dfrac{(\nu^2-\mu^2) \bx+2\mu (\nu+\mu 
(\bx^T\by))\by}{\nu^2+\mu^2+2\mu\nu (\bx^T\by)}
              \\ \\ \dfrac{(\mu^2-\nu^2) \by+2\nu (\mu+\nu 
(\bx^T\by))\bx}{\nu^2+\mu^2+2\mu\nu (\bx^T\by)}
              \end{array}\right) 
                \end{split},\qquad \bx,\by,\bX,\bY\in { S}^n.
\end{equation}

The arrangement of this paper is as follows: In Section \ref{sec2}, we give 
basic 
definitions related to the Lax-
Darboux scheme such as Darboux transformation, the shift operators and
B\"acklund 
transformation.
In Section \ref{sec3}, we derive the B\"acklund transformations for 
the 
vector sine-Gordon equation (\ref{vsG}) together with the Darboux matrix. In
Section \ref{sec4}, we construct a local generalised symmetry for these
non-evolutionary equations using the Darboux matrix.  Meanwhile, we provide the
Lax representation for this symmetry flow, which is one of isotropic integrable
Volterra-type lattices on the sphere \cite{Adler08a}. In Section \ref{sec5}, we
use the re-factorisation of two Darboux matrices to derive the corresponding
Yang-Baxter map and integrable discrete vector sine-Gordon equation, which are
new to the best of our knowledge.

\section{The Lax-Darboux scheme}\label{sec2}
In this section, we recall the Lax representation of the vector sine-Gordon 
equation and 
introduce some basic definitions such as Darboux
transformation and 
B\"acklund transformation.

The vector sine-Gordon equation (\ref{vsG}) is equivalent to the compatibility 
condition $ [\cL,\cA]=0$
for two linear problems \cite{wang02}, \cite{MPW15}
\begin{equation}\label{linprobs}
  \cL \Psi=0,\qquad \cA \Psi=0, 
\end{equation}
where
\begin{eqnarray}\label{lax}
 \cL=D_x-\lambda J -U\quad \mbox{and} \quad \cA=D_t+\lambda^{-1} V,
\end{eqnarray}
and
\begin{eqnarray}\label{JUV}
J=\left(\begin{array}{ccc} 0 & 1 &{\bf 0}^T\\-1 & 0 & {\bf 0}^T\\{\bf 0} &{\bf 
0}&0_n \end{array}\right),\
U=\left(\begin{array}{ccc} 0 & 0 &{\bf 0}^T\\0 & 0 & -{\boldsymbol 
\alpha}_x^T/\beta\\{\bf 0}
&
{\boldsymbol \alpha}_x/\beta&0_n \end{array}\right),\ 
V=\left(\begin{array}{cc} 0 & {\boldsymbol v}^T\\-{\boldsymbol v}
&0_{n+1} \end{array}\right)\, ,
\end{eqnarray}
where ${\bf 0}$ is $n$-dimensional zero column vector
and $0_k$ denotes the $k\times k$ zero matrix.
Without causing confusion, we sometimes simply write $0$ instead for either of
them. 

A Darboux transformation is a linear map acting on a fundamental solution 
\begin{equation}\label{SPsi}
\Psi\mapsto \overline{\Psi}=M\Psi,\qquad \det\,M\ne 0
\end{equation}
such that the matrix function $\overline{\Psi}$ is a fundamental solution of 
the linear problems 
\begin{equation}\label{lax1}
 \overline{\cL}\, \overline{\Psi}=0,\qquad \overline{\cA}\, \overline{\Psi}=0
\end{equation}
with new ``potentials''  $\overline{{\boldsymbol \alpha}}$ and 
$\overline{\beta}$.
The matrix $M$ is often called the Darboux matrix.
From the compatibility of (\ref{SPsi}) and (\ref{lax1}) it follows that 
\begin{eqnarray}
&&D_x(M)=(\lambda J+\overline{U})M-M(\lambda J+U)=\lambda [J,M]+\overline{U}
M-MU;\label{darx}\\
&&D_t(M)=(-\lambda^{-1}\overline{V})M-M(-\lambda^{-1}V)=-\lambda^{-1}(\overline{
V}
M-MV),
\label{dart}
\end{eqnarray}
Equations lead  to {\sl B\"acklund transformations} for the vector sine-Gordon
equation, which relate  two solutions 
${\boldsymbol \alpha},\beta$
and $\overline{{\boldsymbol \alpha}},\overline{\beta}$ of  (\ref{vsG}).

A Darboux transformation maps one compatible system (\ref{linprobs}) into 
another one (\ref{lax1}). It defines a map $\cS: \boldsymbol v
\mapsto\overline{\boldsymbol v}$.
The map (\ref{SPsi}) is invertible ($\det\,M\ne 0$) and it can be iterated
\[
  \cdots\underline{\Psi}\acS \Psi\acS\overline{\Psi}
  \acS\overline{\overline{\Psi}}\acS\cdots\, .
\]
We introduce notations
$$
 \ldots \Psi_{-1}=\underline{\Psi},\ \Psi_0=\Psi,\ \Psi_1=\overline{\Psi},\ 
\Psi_2=\overline{\overline{\Psi}}, \ldots,
$$
$$ \ldots {\boldsymbol v}_{-1}=\underline{{\boldsymbol v}},\ {\boldsymbol 
v}_0={\boldsymbol v},\ {\boldsymbol v}_1=\overline{{\boldsymbol v}},\ 
{\boldsymbol v}_2=\overline{\overline{{\boldsymbol v}}}, \ldots\, .$$
 In these notations
the maps $\cS$ and $\cS^{-1}$ increase and decrease the subscript index 
by one, and therefore we shall call it  a $\cS$--shift, or shift operator $\cS$.
The resulting B\"acklund transformations from the Lax-Darboux representations 
(\ref{darx}) and (\ref{dart}) are integrable differential difference equations.

A Darboux transformation with a parameter $\mu$ denoted by $M_{\mu}$ results in 
the $\cS_{\mu}$ shift. If we also consider a Darboux transformation  
with a different choice of the parameter $\nu$, then the corresponding shift we
denote $\cS_{\nu}$. 
Commuting shifts act on ${\mathbb Z}^2$ lattice where with the vertex $(n,m)$ we
associate the variable $\bsv_{n,m}=\cS^n_{\nu}
\cS^m_{\mu}\bsv$. 
Commutativity of the shifts is (Bianchi permutability) 
\begin{eqnarray}\label{dis}
\cS_{\nu}(M_\mu)M_\nu-\cS_{\mu}(M_\nu)M_\mu=0 
\end{eqnarray}
is equivalent to a 
system of partial-difference equations and we call it Darboux representation 
for  this system of partial-difference equations. In literature such 
representation and Darboux matrices sometimes referred as discrete Lax 
representation and discrete Lax operators respectively.
Differential difference equations (\ref{darx}) and (\ref{dart}) are the
symmetries of this partial-difference equation. 

We can also consider the re-factorisation of a
product of two Darboux matrices, which leads to the Yang-Baxter map.
In next sections, we construct the Darboux matrices for the vector
sine-Gordon equation (\ref{vsG}), and further produce its B\"acklund
transformations, the associated Yang-Baxter map and the integrable difference
equation.

\section{Invariant Darboux matrix under the group $\mathbb{Z}_2\times 
\mathbb{Z}_2\times
\mathbb{Z}_2$}\label{sec3}

The Lax operators $\mathcal{L}$ and $\mathcal{A}$ are invariant under the
reduction group 
$\mathbb{Z}_2\times \mathbb{Z}_2\times
\mathbb{Z}_2$ generated by three transformations $\i$ $r$ and $s$ satisfying 
$\i^2=r^2=s^2={\rm id}$.
In particular, for operator $\cL$ these transformations are:
\begin{eqnarray}
&&\i:\mathcal{L}(\lambda)\rightarrow
-\mathcal{L}^{\dagger}(\lambda)=\mathcal{L}(\lambda),
\label{redgroup1}\\
&&r:\mathcal{L}(\lambda)\rightarrow {
\mathcal{L}^*({\lambda}^*)}=\mathcal{L}(\lambda),
\label{redgroup3}\\
&&s:\mathcal{L}(\lambda)\rightarrow
Q\mathcal{L}(-\lambda)Q=\mathcal{L}(\lambda),
\label{redgroup2}
\end{eqnarray}
where $\mathcal{L}^{\dagger}(\lambda)$ is the formally 
adjoint operator defined by $\mathcal{L}^{\dagger}(\lambda)=-D_x-\lambda
J^T-U^T$, matrix $
Q=\mbox{diag}(-1,1,\ldots,1)$  and $^*$ denotes the complex conjugation. 
 
We assume that Darboux matrix $M(\lambda)$ is a rational function of the 
spectral parameter $\lambda$ and is invariant with respect to the reduction 
group (\ref{redgroup1})--(\ref{redgroup2}) action, namely,
\begin{eqnarray}\label{invMi}
&& \i:M(\lambda)\mapsto(M(\lambda)^{-1})^{T}=M(\lambda), \\ \label{invMr}
&& r:M(\lambda)\mapsto {M^*({\lambda}^*)}=M(\lambda), \\ \label{invMs}
&& s:M(\lambda)\mapsto Q M(-\lambda)Q=M(\lambda).
\end{eqnarray}
Notice that the action of the automorphism $\i$ (\ref{invMr}) is different from 
(\ref{redgroup1})
since $M$ is an element of a Lie group 
rather than Lie algebra.

It is easy to show (see Proposition 1 in \cite{MPW15}) that a 
$\lambda$--independent Darboux matrix results in a constant linear map.

\begin{Pro}{\rm \cite{MPW15}} \label{prop1}  
Assume Darboux matrix $M$ is invariant under the group $\mathbb{Z}_2\times 
\mathbb{Z}_2\times
\mathbb{Z}_2$ and it is independent of spectral parameter $\lambda$.  Then $M$ 
is a constant matrix and of the form
\begin{equation}\label{M0}
 M=\pm \left(\begin{array}{ccc}
                 1&0&{\bf 0}^T\\
                 0&1&{\bf 0}^T\\
                 {\bf 0}&{\bf 0}&\Omega
                \end{array}\right),
\end{equation}
where ${\bf 0}$ is $n$-dimensional zero column vector and a constant ($x,t$ - 
independent) matrix $\Omega\in O(n,\bbbr)$. 
\end{Pro}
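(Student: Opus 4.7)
The plan is to extract enough linear-algebraic constraints from the three reduction symmetries together with $\lambda$-independence of the Darboux equations to pin down the block structure of $M$, and then to read off $x$- and $t$-independence of the surviving free block from the diagonal pattern of $U$ and $V$.

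First I would translate each of the three invariances into a pointwise relation on $M$. Since $M$ is $\lambda$-independent, the $s$-invariance $QM(-\lambda)Q=M(\lambda)$ collapses to $[Q,M]=0$; because $Q=\operatorname{diag}(-1,I_{n+1})$ has a $1\oplus(n+1)$ eigenspace decomposition, this forces $M$ to be block-diagonal with a $1\times1$ and an $(n+1)\times(n+1)$ block. The $\iota$-invariance gives $M^{-T}=M$, i.e., $MM^{T}=I_{n+2}$, so $M$ is orthogonal. The $r$-invariance gives $M^{*}=M$, so $M$ is real. I would then pull one further commutator condition out of the Darboux equation (\ref{darx}): since the left-hand side $D_xM$ carries no $\lambda$, the $\lambda^{1}$ coefficient of the right-hand side must vanish, giving $[J,M]=0$.

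Next I would combine $[Q,M]=0$ with $[J,M]=0$ in the $1+1+n$ blocking. Writing $M=\operatorname{diag}(m_{11},B)$ with $B=\bigl(\begin{smallmatrix} b & c^T\\ d & E\end{smallmatrix}\bigr)$ and computing $JM$ versus $MJ$ block by block, one finds $b=m_{11}$, $c=0$, $d=0$, while $E$ is free. Thus $M=\operatorname{diag}(m_{11},m_{11},E)$. Orthogonality then forces $m_{11}^{2}=1$ and $EE^{T}=I_n$, while reality places $E$ in $O(n,\mathbb R)$; absorbing a sign into $E$ when $m_{11}=-1$ delivers the $\pm$ form of (\ref{M0}) with $\Omega\in O(n,\mathbb R)$.

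Finally I would read off that $\Omega$ is a constant. Matching the $\lambda^{0}$ coefficient in (\ref{darx}) gives $D_xM=\overline U M-MU$; because $U$ (and $\overline U$) has zero blocks on the diagonal of the $1+1+n$ grading, the $(3,3)$-block of $\overline U M-MU$ vanishes, so $\partial_x\Omega=0$. For $t$, the left-hand side of (\ref{dart}) is $\lambda$-independent while the right-hand side is pure $\lambda^{-1}$; matching powers of $\lambda$ forces simultaneously $\overline V M=MV$ and $D_tM=0$, hence $\partial_t\Omega=0$. The main substantive step is the second one — realising that the joint kernel of $\operatorname{ad}_Q$ and $\operatorname{ad}_J$ on $(n+2)\times(n+2)$ matrices is exactly this block-scalar-plus-$\Omega$ family; the remaining calculations are routine bookkeeping with block products.
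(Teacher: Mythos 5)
Your proof is correct and complete. The paper itself gives no argument for this proposition (it is quoted from \cite{MPW15}), but your derivation --- reducing the three invariances to $[Q,M]=0$, $MM^T=I$, $M^*=M$, extracting $[J,M]=0$ from the $\lambda^1$ coefficient of (\ref{darx}), computing the joint commutant of $Q$ and $J$ in the $1+1+n$ blocking, and then reading off $\partial_x\Omega=\partial_t\Omega=0$ from the $\lambda^0$ and $\lambda^{-1}$ coefficients of (\ref{darx}) and (\ref{dart}) --- is exactly the argument the authors are alluding to, and every step checks out.
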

This implies that the constant Darboux matrix defines the transformation
\begin{eqnarray}\label{constr}
 \overline{\beta}=\beta \quad \mbox{and} \quad \overline{\boldsymbol \alpha} 
=\Omega {\boldsymbol \alpha}.
\end{eqnarray}
corresponding to $O(n,\bbbr)$ point symmetry of the vector sine-Gordon equation 
(\ref{vsG}).

A rational in $\lambda$ and reduction group invariant 
(\ref{invMi})--(\ref{invMs}) matrices $M$ must have poles on the orbits of the 
reduction group. They can be represented as a product of matrices which have 
poles in a single orbit. Here we shall study reduction group invariant Darboux 
matrix $M_\nu$ with two simple poles at $\lambda=\pm i\nu,\ \nu\ne 0,\ 
\nu\in\bbbr$. It  is the minimal possible number of simple poles for a 
reduction 
group invariant  Darboux matrix and it corresponds to the dressing matrix for a 
single kink solution \cite{MPW15}. 
It is of the following form (cf. Proposition
2 in \cite{MPW15}):
\begin{eqnarray}\label{Mmk0}
 M_\nu=I_{n+2}+\frac{ i \nu}{\lambda-i\nu} A
-\frac{ i \nu}{\lambda+i\nu} Q A Q, \qquad A\neq 0 .
\end{eqnarray}
\begin{Pro}\label{Aform}
If matrix (\ref{Mmk0}) is invariant under 
reduction group generated by $\i, r$ and $s$, it can be
represent as
\begin{eqnarray}\label{Mmk}
 M_\nu(\bsu)=I_{n+2}+\frac{ i \nu}{\lambda-i\nu} Q a\rangle \langle a
-\frac{ i \nu}{\lambda+i\nu} a\rangle \langle a Q,\qquad\langle
a=(i, \bsu^T), 
\end{eqnarray}
where $\bsu$ is a unit length real vector $\bsu\in S^n$.
\end{Pro}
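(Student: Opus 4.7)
The plan is to translate each of the three invariances into algebraic constraints on the residue matrix $A$ and then solve them. The $s$-invariance $QM(-\lambda)Q=M(\lambda)$ is already built into the ansatz \eqref{Mmk0}: the change of sign $\lambda\mapsto-\lambda$ exchanges the two pole contributions, and conjugation by $Q$ restores the form since $Q^2=I$. So only the $r$- and $\i$-invariances impose genuine constraints on $A$.

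Comparing residues at $\lambda=i\nu$ in $M^*(\lambda^*)=M(\lambda)$ forces $A^*=QAQ$. Writing $A=X+iY$ with real $X,Y$, this is equivalent to $QXQ=X$ and $QYQ=-Y$. With $Q=\mathrm{diag}(-1,I_{n+1})$ the first condition makes $X$ block diagonal and the second makes $Y$ block off-diagonal, so
\begin{equation*}
 A=\begin{pmatrix} x & i b^T \\ i c & X' \end{pmatrix},\qquad x\in\bbbr,\ b,c\in\bbbr^{n+1},\ X'\in M_{n+1}(\bbbr).
\end{equation*}
The $\i$-invariance amounts to $MM^T=I$; expanding the product and equating to zero the coefficients of $(\lambda-i\nu)^{-2}$ and $(\lambda-i\nu)^{-1}$ yields respectively
\begin{equation*}
  AA^T=0,\qquad 2(A+A^T)=AQA^TQ+QAQA^T,
\end{equation*}
with the analogous conditions at $-i\nu$ following by $s$-invariance.

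In block form, $AA^T=0$ decomposes into $x^2=|b|^2$, $xc+X'b=0$, and $X'X'^T=cc^T$. A direct block computation shows that all cross-block entries in $AQA^TQ+QAQA^T$ cancel identically, so the simple-pole equation reduces to $b=-c$, $2x=x^2+|b|^2$, and $X'+X'^T=cc^T+X'X'^T$. With $b=-c$ the equation $xc+X'b=0$ becomes $X'c=xc$, and $2x=2x^2$ forces $x\in\{0,1\}$. The case $x=0$ gives $|b|=|c|=0$ and then $X'X'^T=0\Rightarrow X'=0$, so $A=0$, excluded. Hence $x=1$ and $|c|=|b|=1$. Splitting $X'=S+K$ into symmetric and antisymmetric parts, the bottom-right equation combined with $X'X'^T=cc^T$ gives $2S=2cc^T$, so $S=cc^T$; the relation $X'c=c$ with $|c|=1$ then forces $Kc=0$; substituting $X'=cc^T+K$ into $X'X'^T=cc^T$ collapses to $K^2=0$, which for a real antisymmetric matrix implies $K=0$. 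Therefore $X'=cc^T$, and setting $\bsu=c\in S^n$ recovers precisely $A=Qa\rangle\langle a$ with $\langle a=(i,\bsu^T)$.

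The main obstacle is the block-matrix bookkeeping in the simple-pole equation. The decisive algebraic point that makes the proof tractable is the automatic cancellation of the off-diagonal blocks in $AQA^TQ+QAQA^T$: it decouples the residue equation into three manageable scalar/vector/matrix pieces and ultimately pins down the rank-one form; without this cancellation one would have to handle a coupled system on $(x,b,c,X')$ with no obvious reduction.
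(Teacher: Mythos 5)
Your proof is correct: I checked the block computations and they all hold, including the decisive cancellation of the off-diagonal blocks in $AQA^TQ+QAQA^T$ and the chain $S=cc^T$, $Kc=0$, $K^2=0\Rightarrow K=0$. Your starting point coincides with the paper's: $s$-invariance is automatic, $r$-invariance gives $A^*=QAQ$ and hence your block form, and $\iota$-invariance gives $M M^T=I$, whose double- and simple-pole parts are exactly the paper's conditions (\ref{At}) and (\ref{PA}) (your simple-pole equation $2(A+A^T)=AQA^TQ+QAQA^T$ is (\ref{PA}) rewritten). Where you genuinely diverge is in how the two conditions are combined. The paper isolates Lemma \ref{AA} in the Appendix, which extracts the rank-one form $A=h\left(\begin{smallmatrix}-i\\ \mathbf{m}\end{smallmatrix}\right)(i,\mathbf{n}^T)$ from the quadratic conditions $AA^T=A^TA=0$ \emph{alone}, via the rank argument ${\rm rank}(H)={\rm rank}(HH^T)={\rm rank}(\mathbf{f}\mathbf{f}^T)=1$ so $H=\delta\,\mathbf{f}\mathbf{e}^T$; only then is the simple-pole condition applied to this already rank-one ansatz, where it reduces to the two scalar normalizations $h=1$, $\mathbf{m}=\mathbf{n}$. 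You instead feed in the simple-pole condition immediately, use the off-diagonal cancellation to pin down $b=-c$ and $x=1$ first, and recover the rank-one structure of the lower-right block afterwards through the symmetric/antisymmetric split and $K^2=0$. The paper's route buys a reusable, self-contained lemma (used to argue that there are no higher-rank kink solutions) and a very short final normalization; yours avoids the rank argument and the appendix entirely, keeping everything in one block computation at the cost of the somewhat longer $S+K$ analysis. Note also that the paper records both $AA^T=0$ and $A^TA=0$, whereas you only need $AA^T=0$; this is consistent, since $A^TA=0$ holds automatically for the final answer.
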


\begin{proof} Note that the given matrix is invariant under the action of $s$.
It immediately follows from (\ref{invMi}) that $M_\nu M_\nu^T=I_{n+2}$, which
implies that
\begin{eqnarray}
&&A A^T=A^TA=0; \label{At}\\
&& (I_{n+2}-\frac{ 1}{2} QAQ)A^T+A(I_{n+2}-\frac{ 1}{2} QA^TQ) =0.\label{PA}
\end{eqnarray}
These identities correspond to the vanishing of the second and first order
poles at $\lambda=\pm i \nu$, respectively.
The invariance under the action of $r$ implies
\begin{eqnarray}
&&A^*=QAQ, \label{A*}
\end{eqnarray}
So the matrix $A$ satisfies the conditions of Lemma \ref{AA} in Appendix. 
Hence, it is of the form
$$A=h\left( \begin{array}{c} -i\\{\bf m}\end{array}\right) 
\left( \begin{array}{cc} i & {\bf u}\end{array}\right), \qquad
0\neq h\in \bbbr,\quad  {\bf m}, {\bf u}\in \bbbr^{n+1},\quad |{\bf m}|= |{\bf
u}|=1 .
$$
Substituting it into (\ref{PA}), we obtain 
$$h=1, \qquad {\bf m}={\bf u}.$$
Thus we have $A=Q a\rangle \langle a$, from which the result in the statement
immediately follows.
\end{proof}

This proposition confirms that the kink solutions of rank $1$ for the vector
sine-Gordon equation (\ref{vsG}) obtained in \cite{MPW15} are indeed generic
and there are no kink solutions of higher ranks. 

\begin{Pro}\label{prop4}
Matrix $M_\nu(\bsu)$ (\ref{Mmk}) satisfies the compatibility condition (\ref{dart})
if and only if
\begin{eqnarray}\label{eqdar}
&& \overline{V} M_\nu^0(\bsu)=M_\nu^0(\bsu) V,\label{conda}\\
&&D_t\left( Q a\rangle \langle a \right)+\frac{1}{i\nu}
 \overline{V} Q a\rangle \langle a-\frac{1}{i\nu} Q a\rangle
\langle a V=0,\label{eqtt}
\end{eqnarray}
where 
\[
 M_\nu^0(\bsu)=\lim_{\lambda\to 0}M_\nu(\bsu).
\]
\end{Pro}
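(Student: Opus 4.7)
The plan is to read \eqref{dart} as an equality between rational matrix functions of $\lambda$ and analyse it pole by pole. By the explicit form \eqref{Mmk}, the Darboux matrix $M_\nu(\bsu)$ is rational in $\lambda$, has simple poles exactly at $\lambda=\pm i\nu$, and satisfies $M_\nu\to I_{n+2}$ as $\lambda\to\infty$. Therefore both sides of \eqref{dart} are rational and vanish at infinity; the LHS $D_t(M_\nu)$ can have simple poles only at $\lambda=\pm i\nu$, while the RHS $-\lambda^{-1}(\overline V M_\nu - M_\nu V)$ can have simple poles at $\lambda=0$ and at $\lambda=\pm i\nu$. By the standard partial-fraction/Liouville argument, the identity \eqref{dart} is equivalent to matching the residues at each of these three points.

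First I would analyse $\lambda=0$. Since the LHS is regular there, the residue of the RHS at $\lambda=0$ must vanish, which reads
\[
\overline V M_\nu^0(\bsu)-M_\nu^0(\bsu)V=0,
\]
giving \eqref{conda}. Next I would compute residues at $\lambda=i\nu$: the residue of $M_\nu$ is $i\nu\,Qa\rangle\langle a$, so the LHS contributes $i\nu\,D_t(Qa\rangle\langle a)$, and the RHS contributes $-(i\nu)^{-1}\cdot i\nu(\overline V Qa\rangle\langle a - Qa\rangle\langle a\,V)$. Equating and dividing by $i\nu$ produces precisely \eqref{eqtt}.

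The remaining point is to show that matching residues at $\lambda=-i\nu$ is not a genuinely new condition. By Proposition \ref{Aform}, $M_\nu$ is invariant under the complex-conjugation reduction $r$ \eqref{invMr}, so the residue of $M_\nu$ at $-i\nu$ is obtained from the residue at $+i\nu$ by complex conjugation (using $Qa^*\rangle=a\rangle$ and $\langle a^*=\langle a\,Q$ for $\langle a=(i,\bsu^T)$). Because $V$ and $\overline V$ are real, applying complex conjugation to \eqref{eqtt} gives the residue-matching identity at $\lambda=-i\nu$ automatically. Conversely, if \eqref{dart} holds then in particular the residues at $\lambda=0$ and $\lambda=i\nu$ match, giving \eqref{conda}--\eqref{eqtt}. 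This establishes the ``if and only if'' statement.

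The main obstacle is really a bookkeeping one rather than a substantive one: one must keep track of the correct signs and factors of $i\nu$ in the residue computation, and justify carefully that the $r$-reduction forces the $\lambda=-i\nu$ condition to coincide with the complex conjugate of \eqref{eqtt}. Everything else reduces to the observation that a rational matrix function vanishing at infinity is uniquely determined by its principal parts, so that \eqref{dart} is equivalent to the listed residue identities.
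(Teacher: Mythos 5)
Your proposal is correct and follows essentially the same route as the paper: the authors also read (\ref{dart}) as a residue-matching identity, extract (\ref{conda}) at $\lambda=0$ and (\ref{eqtt}) at $\lambda=i\nu$, and dismiss the pole at $\lambda=-i\nu$ as automatic ``due to the reduction group.'' Your write-up merely makes explicit the Liouville/partial-fraction step and the fact that it is the $r$-reduction (complex conjugation, via $Qa^*\rangle=a\rangle$ and the reality of $V,\overline V$) that forces the $-i\nu$ condition to be the conjugate of (\ref{eqtt}).
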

\begin{proof} Taking the residue at $\lambda=0$ on both sides of
(\ref{dart}), we obtain (\ref{conda}). Taking the residue at $\lambda=i\nu$
on both sides of
(\ref{dart}), we obtain (\ref{eqtt}). The residue at $\lambda=-i \nu$ will
also vanishes due to the reduction group.
\end{proof}
We now convert the above conditions to the conditions for the components of 
vector $\langle a=(i, \bsu^T)$. 
\begin{Pro}
Matrix $M_\nu(\bsu)$ given by (\ref{Mmk})
satisfies (\ref{conda}) if
\begin{eqnarray}\label{res0}
 {\boldsymbol u}=\pm\frac{\overline{\boldsymbol
v}+{\boldsymbol v}}{|\overline{\boldsymbol v}+{\boldsymbol v}|}.
\end{eqnarray} 
\end{Pro}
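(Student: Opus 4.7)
My plan is to exploit the remarkable fact that $M_\nu^0 = M_\nu(\bsu)|_{\lambda=0}$ collapses to a very simple reflection matrix, reducing (\ref{conda}) to a single algebraic condition on $\bsu$, which is then readily matched with the proposed expression.

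First, I would evaluate $M_\nu^0$ explicitly. Setting $\lambda=0$ in (\ref{Mmk}) yields
\begin{equation*}
 M_\nu^0(\bsu) = I_{n+2} - Qa\rangle\langle a - a\rangle\langle a Q.
\end{equation*}
Using $\langle a = (i,\bsu^T)$ and $Q = \mbox{diag}(-1,1,\ldots,1)$, the two rank-one matrices $Qa\rangle\langle a$ and $a\rangle\langle aQ$ are each easy to write in block form, and their off-diagonal blocks cancel in the sum. One obtains the block-diagonal expression
\begin{equation*}
 M_\nu^0(\bsu) = \begin{pmatrix} -1 & 0 \\ 0 & P \end{pmatrix}, \qquad P := I_{n+1} - 2\bsu\bsu^T,
\end{equation*}
so $M_\nu^0$ acts as $-1$ on the first coordinate and as the Householder reflection $P$ fixing $\bsu^\perp$ on the remaining $n+1$ coordinates.

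Next, I would substitute this into (\ref{conda}). With $V$ written in the corresponding $1 + (n+1)$ block form (off-diagonal blocks $\pm\bsv$ and $\bsv^T$), the products $M_\nu^0 V$ and $\overline{V} M_\nu^0$ each have a single nontrivial block row and column, and equating them reduces the matrix identity to the single vector equation
\begin{equation*}
 P\bsv = -\bsv_1,
\end{equation*}
where I write $\bsv_1 = \overline{\bsv}$. (Symmetry of $P$ and the involution $P^2=I$ make the top-right and bottom-left blocks give the same condition.)

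Finally, I would verify the proposed ansatz satisfies this. Taking $\bsu = \pm(\bsv_1+\bsv)/|\bsv_1+\bsv|$, I compute $\bsu^T\bsv = (1+\bsv^T\bsv_1)/|\bsv_1+\bsv|$ using $|\bsv|=1$, together with $|\bsv_1+\bsv|^2 = 2(1+\bsv^T\bsv_1)$. Then
\begin{equation*}
 P\bsv = \bsv - 2\bsu(\bsu^T\bsv) = \bsv - (\bsv_1+\bsv) = -\bsv_1,
\end{equation*}
as required. There is no real obstacle in this proof; the one step worth doing carefully is the cancellation in $Qa\rangle\langle a + a\rangle\langle a Q$ that produces the clean block-reflection form of $M_\nu^0$, since everything downstream is driven by that simple structure.
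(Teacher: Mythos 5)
Your proposal is correct and follows essentially the same route as the paper: both evaluate $M_\nu(\bsu)$ at $\lambda=0$, exploit the fact that $M_\nu^0$ is an orthogonal involution (your explicit block form with the Householder reflection $P=I_{n+1}-2\bsu\bsu^T$ is just a more transparent writing of this), and reduce (\ref{conda}) to the single vector identity $\overline{\bsv}+\bsv=2(\bsu^T\bsv)\bsu$, i.e.\ $P\bsv=-\overline{\bsv}$. The only (immaterial) difference is that you verify the stated sufficiency directly, whereas the paper runs the same computation in the reverse direction to solve for $\bsu$ up to sign.
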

\begin{proof} 
It follows from (\ref{Mmk}) that
\begin{eqnarray}\label{mo}
 M_\nu^0(\bsu)=I_{n+2}- Q a\rangle \langle a- a\rangle \langle a Q
=(M_\nu^0(\bsu))^T=(M_\nu^0(\bsu))^{-1}.
\end{eqnarray}
Thus we have $\overline{V} =M_\nu^0(\bsu) V M_\nu^0(\bsu)$. Using (\ref{JUV}), 
we 
get 
\begin{eqnarray*}
 &&
 \overline{\boldsymbol v}+{\boldsymbol v}=2  ({\boldsymbol u}^T {\boldsymbol v})
{\boldsymbol u}.
\end{eqnarray*}
It follows that $  ({\boldsymbol u}^T {\boldsymbol
v})^2=\frac{1}{2}(\overline{\boldsymbol v}^T{\boldsymbol v}+1)=\frac{1}{4} |
\overline{\boldsymbol v}+{\boldsymbol v}|^2.
$ Thus we have ${\boldsymbol u}=\pm\frac{\overline{\boldsymbol
v}+{\boldsymbol v}}{|\overline{\boldsymbol v}+{\boldsymbol v}|}$. 
\end{proof}

In the rest of the paper we assume that $\overline{\boldsymbol
v}+{\boldsymbol v}\ne {\bf 0}$ and choose the positive sign in (\ref{res0}).
We now compute the B\"acklund transformation and summarise the result for the 
Darboux matrix.

\begin{The}\label{th1}
The vector sine-Gordon equation (\ref{vsG}) possesses a B\"acklund transformation
\begin{eqnarray}
D_t\left(\frac{\overline{\boldsymbol v}+{\boldsymbol v}}
{|\overline{\boldsymbol v}+{\boldsymbol v}|}\right)
=-\frac{1}{2 \nu}(\overline{\boldsymbol 
v}-{\boldsymbol v}); \quad \boldsymbol v= 
\left(\begin{array}{c}\beta\\{\boldsymbol \alpha}
\end{array}\right),\label{alphaeqq} 
\end{eqnarray}
where  $\nu\in \mathbb{R}$ is constant.
The corresponding Darboux matrix is 
\begin{eqnarray}\label{darM}
 M_\nu(\bsu)=I_{n+2}+\frac{ i \nu}{\lambda-i\nu}Q a\rangle \langle a
-\frac{ i \nu}{\lambda+i\nu} a\rangle \langle a Q,\quad
\langle a= (i, {\boldsymbol u}^T),\quad {\boldsymbol u}=
\frac{\overline{\boldsymbol v}+{\boldsymbol v}}{|\overline{\boldsymbol
v}+{\boldsymbol v}|}.
\end{eqnarray}
\end{The}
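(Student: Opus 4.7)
The plan is to build on the two preceding propositions so that only the matrix equation (\ref{eqtt}) remains to be analysed. By Proposition~\ref{prop4}, the $D_t$-compatibility of $M_\nu(\bsu)$ is equivalent to (\ref{conda}) together with (\ref{eqtt}); the immediately preceding proposition has already deduced from (\ref{conda}) that $\bsu=(\overline{\bsv}+\bsv)/|\overline{\bsv}+\bsv|$, with the positive sign adopted in the text. What remains is to extract the $t$-evolution of $\bsu$ from (\ref{eqtt}), which is the content of the theorem.

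The main step is to exploit the rank-one structure of $A=Qa\rangle\langle a$. Applying the product rule to $D_t A$ and rearranging, (\ref{eqtt}) takes the form
\begin{equation*}
X\,\langle a + Qa\rangle\, Y = 0, \qquad X = D_t(Qa\rangle) + \frac{1}{i\nu}\,\overline{V}\, Qa\rangle, \qquad Y = D_t\langle a - \frac{1}{i\nu}\,\langle a\, V.
\end{equation*}
The crucial observation is the null identity $\langle a\, a\rangle = -1 + |\bsu|^2 = 0$, together with $\langle a\, Q\, a\rangle = 2$. Right-multiplication of the equation by $a\rangle$ kills $X\langle a$ and forces $Y\, a\rangle = 0$; right-multiplication by $Qa\rangle$ then yields $X = c\, Qa\rangle$ for a scalar $c$, and resubstitution gives $Y = -c\,\langle a$. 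Hence (\ref{eqtt}) is equivalent to the pair
\begin{equation*}
D_t(Qa\rangle) + \frac{1}{i\nu}\,\overline{V}\, Qa\rangle = c\, Qa\rangle, \qquad D_t\langle a - \frac{1}{i\nu}\,\langle a\, V = -c\, \langle a,
\end{equation*}
for one common scalar $c$.

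The final step is explicit: using $Qa\rangle=(-i,\bsu^T)^T$ and the block form (\ref{JUV}) of $\overline{V}$, the top entry of the first equation fixes $c = \overline{\bsv}^T\bsu/\nu$, while the lower $(n+1)$-block gives $D_t\bsu = c\,\bsu - \overline{\bsv}/\nu$. With $P=|\overline{\bsv}+\bsv|$ and using $|\bsv|=|\overline{\bsv}|=1$, one checks that $\overline{\bsv}^T\bsu = \bsv^T\bsu = P/2$, so $c\,\bsu = (\overline{\bsv}+\bsv)/(2\nu)$ and therefore $D_t\bsu = -(\overline{\bsv}-\bsv)/(2\nu)$, which is precisely (\ref{alphaeqq}). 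The second equation of the pair is satisfied by exactly the same identities, providing the required consistency. The only real obstacle is recognising the correct rank-one reduction of (\ref{eqtt}) to a scalar-coupled pair of vector equations; once the isotropy $\langle a\, a\rangle=0$ is brought to bear this reduction is essentially forced, and the remaining component-wise computation is routine.
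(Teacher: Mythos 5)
Your argument is correct and follows essentially the same route as the paper: both proofs reduce (\ref{eqtt}) via the rank-one structure of $Qa\rangle\langle a$ to a pair of vector equations coupled by a single scalar (your $c$ is the paper's $\gamma=\tfrac{1}{\nu}\bsv^T\bsu$), and then extract $D_t\bsu=-\tfrac{1}{2\nu}(\overline{\bsv}-\bsv)$. The only cosmetic difference is that the paper eliminates the scalar by summing the two equations while you solve for it from the first component, and you make explicit (via $\langle a\,a\rangle=0$, $\langle a\,Q\,a\rangle=2$) the splitting step that the paper merely asserts.
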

\begin{proof}
We derive the B\"acklund transformation from (\ref{eqtt}), which is equivalent to
\begin{eqnarray}\label{eqv1}
 Q \left( a_t\rangle +\frac{1}{i\nu}
Q^{-1} \overline{V} Q a\rangle \right)\langle a + Q a\rangle\left( \langle a_t 
-\frac{1}{i\nu} 
\langle a V\right) =0.
\end{eqnarray}
Therefore, there exists a scalar $\gamma$ such that
\begin{eqnarray}\label{eqv2}
&& a_t\rangle +\frac{1}{i\nu} Q^{-1} \overline{V} Q a\rangle=a\rangle \gamma; 
\qquad
 a_t\rangle +\frac{1}{i\nu}   V a\rangle=-a\rangle \gamma. 
\end{eqnarray}
Taking the sum of these two equations we get 
\begin{eqnarray*}
 a_t\rangle +\frac{1}{2i\nu} \left( Q^{-1} \overline{V} Q +V\right) a\rangle=0.
\end{eqnarray*}
that is, \begin{eqnarray*}
&&({\boldsymbol v}^T-\overline{\boldsymbol
v}^T){\boldsymbol u}=0;\qquad  D_t{\boldsymbol u}=-\frac{1}{2 \nu}(\overline{\boldsymbol 
v}-{\boldsymbol v}).
\end{eqnarray*}
The first equation is satisfied due to (\ref{res0}) and $|\bsv|=|\bsv_1|=1$.
The second one is the required equation (\ref{alphaeqq}). It is easy to check
that equations (\ref{eqv1}) and (\ref{eqv2}) (with
$\gamma=\frac{1}{\nu}\bsv^T\bsu$) are satisfied due to (\ref{alphaeqq}).
\end{proof}

Using the compatibility condition (\ref{darx}) between the Darboux matrix 
(\ref{darM}) and matrix $U$, we get the following B\"acklund transformation 
with respect to $x$:
\begin{Cor}\label{cor1}
For the Darboux matrix  $M_\nu(\bsu)$ given in Theorem \ref{th1}, the 
compatibility condition (\ref{darx}) leads to the following \sl B\"acklund 
transformation (with respect to $x$):
\begin{eqnarray}\label{backx}
\frac{D_x(\overline{\boldsymbol \alpha)}}{\overline{\beta}}- 
\frac{D_x(\boldsymbol 
\alpha)}{\beta}=-\frac{2 \nu}{|\overline{\boldsymbol
v}+{\boldsymbol v}|}\left(\overline{\boldsymbol 
\alpha}+{\boldsymbol \alpha} \right).
\end{eqnarray}
\end{Cor}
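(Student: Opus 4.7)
The plan is to extract (\ref{backx}) from the compatibility equation (\ref{darx}) by examining its $\lambda^0$ coefficient as $\lambda\to\infty$. Writing the large-$\lambda$ expansion
\[
 M_\nu(\bsu)=I_{n+2}+\lambda^{-1}P+O(\lambda^{-2}),\qquad P=i\nu\bigl(Qa\rangle\langle a-a\rangle\langle aQ\bigr),
\]
one has $D_xM_\nu=O(\lambda^{-1})$, while $\lambda[J,M_\nu]=[J,P]+O(\lambda^{-1})$ and $\overline U M_\nu-M_\nu U=(\overline U-U)+O(\lambda^{-1})$. Matching the $\lambda^0$ term of (\ref{darx}) therefore reduces the problem to the purely algebraic identity
\[
 \overline U-U=-i\nu\bigl[J,\,Qa\rangle\langle a-a\rangle\langle aQ\bigr].
\]

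The remainder is a direct block computation. Using $\langle a=(i,\bsu^T)$ together with $Q=\mbox{diag}(-1,1,\ldots,1)$, a short calculation gives
\[
 Qa\rangle\langle a-a\rangle\langle aQ=\begin{pmatrix}0 & -2i\bsu^T\\ 2i\bsu & 0\end{pmatrix}
\]
in the $1+(n+1)$ block decomposition. Splitting $\bsu=(u_0,\bsu_\alpha^T)^T$ with $u_0\in\bbbr$ and $\bsu_\alpha\in\bbbr^n$, and recalling that $J$ is supported in the $2\times 2$ upper-left corner, the commutator is clean in the $1+1+n$ block form: the $u_0$-piece of the above matrix is a scalar multiple of $\bigl(\begin{smallmatrix}0 & 1\\-1 & 0\end{smallmatrix}\bigr)$ and hence commutes with $J$, while the $\bsu_\alpha$-piece produces
\[
 -i\nu\bigl[J,\,Qa\rangle\langle a-a\rangle\langle aQ\bigr]=\begin{pmatrix}0 & 0 & 0\\ 0 & 0 & 2\nu\bsu_\alpha^T\\ 0 & -2\nu\bsu_\alpha & 0\end{pmatrix}.
\]

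Finally, the matrix $\overline U-U$ read off from (\ref{JUV}) has its only nonzero entries equal to $-\overline{\bsalpha}_x^T/\overline{\beta}+\bsalpha_x^T/\beta$ in the second row and $\overline{\bsalpha}_x/\overline{\beta}-\bsalpha_x/\beta$ in the second column. Substituting $\bsu_\alpha=(\overline{\bsalpha}+\bsalpha)/|\overline{\bsv}+\bsv|$ (the $\bsalpha$-part of the unit vector $\bsu$ fixed in Theorem \ref{th1}) and matching the second columns of the two sides yields exactly (\ref{backx}); the second rows give its transpose. I expect the only care point to be the block bookkeeping inside $[J,\,\cdot]$; the cancellation of the $u_0$-contribution against the upper $2\times 2$ part of $J$ is a useful sanity check, as it forces the $(2,2)$-entry of $\overline U-U$ to vanish, in agreement with the structure of $U$. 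The residues of (\ref{darx}) at the poles $\lambda=\pm i\nu$ produce no independent constraint on $\bsalpha,\overline{\bsalpha}$: they split into rank-one equations along $a\rangle$ and $\langle a$ whose consistency is already enforced by the choice of $\bsu$ made in Theorem \ref{th1}.
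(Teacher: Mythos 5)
Your derivation of (\ref{backx}) is correct and is essentially the paper's own first step: extracting the $\lambda^{0}$ coefficient of (\ref{darx}) at $\lambda=\infty$ gives precisely the paper's relation $\overline U=U+i\nu\left[Qa\rangle\langle a-a\rangle\langle aQ,\,J\right]$, and your block computation of the commutator together with the identification $\bsu_\alpha=(\overline{\bsalpha}+\bsalpha)/|\overline{\bsv}+\bsv|$ reproduces the paper's ``direct calculation'' leading to (\ref{backx}).

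The one place you diverge from the paper is your final sentence. The residue of (\ref{darx}) at $\lambda=i\nu$ is the genuine differential condition $D_x\left(Qa\rangle\langle a\right)=\overline U\,Qa\rangle\langle a-Qa\rangle\langle a\,U+i\nu\left[J,Qa\rangle\langle a\right]$, and it is not ``already enforced by the choice of $\bsu$'': it involves $D_x\bsu$, hence the individual derivatives $\bsalpha_x$ and $\overline{\bsalpha}_x$, which are determined neither by the definition of $\bsu$ nor fully by (\ref{backx}) (only the combination $\overline{\bsalpha}_x/\overline\beta-\bsalpha_x/\beta$ is fixed). The paper spends the bulk of its proof on exactly this verification: it computes $D_x|\overline{\bsv}+\bsv|=\nu(\beta-\overline\beta)$ using $\beta_x=-\bsalpha^T\bsalpha_x/\beta$ and (\ref{backx}), and then shows that the two factors in the rank-one splitting reduce to $\mp\frac{\nu(\overline\beta+\beta)}{|\overline{\bsv}+\bsv|}$ times $\langle a$ and $a\rangle$ respectively, so that the two terms cancel. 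Your splitting into rank-one pieces along $a\rangle$ and $\langle a$ is the right structure, but the cancellation of the resulting scalars is a computation, not a formal consequence, and as written this step is asserted rather than proved. Since the corollary claims only that (\ref{darx}) leads to (\ref{backx}), your argument does establish the stated implication; the omitted verification is what shows that (\ref{darx}) imposes nothing beyond (\ref{backx}), and the paper includes it as part of the proof.
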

\begin{proof}
Substituting the Darboux matrix $M_\nu(\bsu)$ into (\ref{darx}) and 
comparing the residues of all poles from both sides, we have
\begin{eqnarray}
&& \overline{U}=U+i \nu\left[ Q a\rangle \langle a-a\rangle \langle a Q , J 
\right];\label{darxx} \\
&& D_x\left( Q a\rangle \langle a \right)=\overline{U}  Q a\rangle \langle a - 
Q a\rangle \langle a U
+ i \nu \big[J,  Q a\rangle \langle a \big].\label{darxcom}
\end{eqnarray}
By direct calculation, we obtain the identity (\ref{backx}) in the statement 
from (\ref{darxx}). 

We now check that (\ref{darxcom}) satisfies under the condition of
(\ref{backx}). We rewrite it as
\begin{eqnarray}\label{eqvxcon}
 Q \left( D_x(a)\rangle -\overline{U} a\rangle +i \nu  J a\rangle  
\right)\langle 
a 
 + Q a\rangle\left(\langle D_x(a) +
\langle a U+i \nu \langle a J\right) =0.
\end{eqnarray}
Differentiating $|\overline{\boldsymbol
v}+{\boldsymbol v}|^2$ with
respect to $x$ and using $\beta_x=-\frac{1}{\beta}{\boldsymbol \alpha}^T 
{\boldsymbol
\alpha}_x$,  we have
\begin{eqnarray*}
&&
D_x(|\overline{\boldsymbol v}+{\boldsymbol v}|)
=\frac{1}{|\overline{\boldsymbol
v}+{\boldsymbol v}|} (\overline{\boldsymbol v}^T+{\boldsymbol
v}^T) D_x(\overline{\boldsymbol v}+{\boldsymbol 
v}) =\frac{1}{|\overline{\boldsymbol
v}+{\boldsymbol v}|}(\beta
\overline{\boldsymbol \alpha}^T-\overline{\beta} {\boldsymbol
\alpha}^T)\left(\frac{{\boldsymbol \alpha}_x}{\beta}-\frac{ 
\overline{\boldsymbol
\alpha}_x}{\overline{\beta}}\right)
\\
&&\qquad=\frac{2 \nu}{|\overline{\boldsymbol
v}+{\boldsymbol v}|^2}(\beta \overline{\boldsymbol
\alpha}^T-\overline{\beta} {\boldsymbol \alpha}^T)\left(\overline{\boldsymbol 
\alpha}+{\boldsymbol \alpha} \right) =\nu (\beta-\overline{\beta}) .
\end{eqnarray*}
Using it and (\ref{backx}), by direct calculation we obtain
\begin{eqnarray*}
&& \langle D_x(a) +\langle a U+i \nu \langle a J=-\frac{\nu 
(\overline{\beta}+\beta)
}{|\overline{\boldsymbol v}+{\boldsymbol v}|}\langle a;\\
&&D_x(a)\rangle -
 \overline{U} a\rangle +i \nu  J a\rangle=\frac{\nu (\overline{\beta}+\beta)
}{|\overline{\boldsymbol v}+{\boldsymbol v}|}a \rangle  .
\end{eqnarray*}
Substituting it into (\ref{eqvxcon}), we see the identity is valid and thus we
proved the statement.
\end{proof}

When $n=1$, we take $\alpha=\sin \theta$ and  $\beta=\cos \theta$. The
vector sine-Gordon equation (\ref{vsG})
reduces to the scalar sine-Gordon equation 
\begin{eqnarray}\label{ssg}
\theta_{xt}=\sin \theta 
\end{eqnarray}
We show that (\ref{alphaeqq}) and (\ref{backx}) give us its the well-known 
B\"acklund
transformation.

In this case, we have 
$\frac{ \overline{\bsv}+\bsv}{|\overline{\bsv}+\bsv|}=\left(\begin{array}{c} 
\cos \frac{\overline{\theta}+\theta}{2}\\
\sin \frac{\overline{\theta}+\theta}{2}\end{array}\right)$.
Substituting it into (\ref{alphaeqq}) and using trigonometric identities, we 
obtain
\begin{eqnarray}\label{sgst}
\overline{\theta}_t+\theta_t=\frac{2}{\nu}
\sin\tfrac{\theta-\overline{\theta}}{2}. \label{darsgt} 
\end{eqnarray}
Substituting the above $\alpha$ and $\beta$ into the B\"acklund
transformation (\ref{backx}), we get
\begin{eqnarray}\label{sgsx}
\overline{\theta}_x-\theta_x=-\frac{2\nu}{2 \cos
\tfrac{\overline{\theta}-\theta}{2}}(\sin \overline{\theta}+\sin \theta) = - 2
\nu \sin \tfrac{\overline{\theta}+\theta}{2}.
\end{eqnarray}

B\"acklund transformations (\ref{backx}), (\ref{sgsx}) provide us with
integrable discretisations of the vector (\ref{vsG}) and scalar (\ref{ssg})
sine-Gordon equations. There exists another integrable discretisation of
these equations, recently published in \cite{tsuchida15}, which has a different
nature and is not directly related to the B\"acklund transformations of the
vector
or scalar sine-Gordon equations.

\section{A local symmetry of the B\"acklund transformation}\label{sec4}
As discussed in Section \ref{sec2} the B{\"a}cklund transformations
(\ref{alphaeqq}) and (\ref{backx}) can 
be viewed as integrable differential--difference equations.
In this section, we show how to construct one of their generalised symmetries 
using the Darboux 
matrix (\ref{darM}). This symmetry is an integrable evolutionary equation, 
which belongs to the list of integrable Volterra-type lattice systems on a 
sphere satisfying the integrability conditions following from the existence of 
a formal recursion operator \cite{Adler08a}.  We construct a Lax 
operator, canonically associated with the Darboux matrix (\ref{darM}) which 
forms Lax-Darboux representation for this local symmetry.

The Lax operators $\cA, \cL$ (\ref{lax}) have simple poles at points of the 
degenerate orbits $\lambda=0, \infty$ and are invariant with respect to the 
reduction group generated by the transformations 
(\ref{redgroup1})-(\ref{redgroup2}).  Starting from the Darboux matrix
$M_\nu(\bsu)$ (\ref{darM}) we construct a new  Lax
operator 
\begin{equation}\label{laxtaunu}
 \cB_\nu=D_{\tau }-\cU_\nu,
\end{equation} 
which is invariant with respect to the reduction group and matrix 
$U_{\nu}$ has the same set of  poles as the logarithmic Fr\'echet derivative 
of the Darboux matrix. Thus, in the case of the Darboux matrix $M_\nu(\bsu)$ 
(\ref{darM}) matrix $\cU_\nu$ has the form
\begin{eqnarray}
&&\cU_\nu(\lambda)=  \frac{i\nu B}{\lambda
-i \nu}-\frac{i \nu Q B Q}{\lambda +i \nu}
\label{U}
\end{eqnarray}
and satisfies
\begin{eqnarray}
&& \cU_\nu(\lambda)=-\cU_\nu^{T}(\lambda), \qquad 
\cU_\nu^*(\lambda^*)=\cU_\nu(\lambda),  \qquad Q
\cU_\nu(-\lambda) Q^{-1}=\cU_\nu(\lambda). \label{InvU}
\end{eqnarray}
For  (\ref{U}) conditions (\ref{InvU}) are equivalent to
\begin{equation}\label{InvB}
 B=-B^T,\qquad B=QB^*Q.
\end{equation}
We recall that (see (\ref{darM}))
\begin{eqnarray}
&& M_\nu(\bsu)=I_{n+2}+\frac{i\nu 
}{\lambda-i\nu} A-\frac{i\nu }{\lambda+i\nu} QAQ, \qquad A=  Q 
a\rangle \langle a,\quad \langle a=(i,\bsu^T), \quad
|\bsu|=1.
\label{M}
\end{eqnarray}
The compatibility condition  
\begin{eqnarray}\label{zero}
D_{\tau } M_\nu(\bsu)-\cS_\nu (U_{\nu}) \ M_\nu(\bsu)+M_\nu(\bsu) \ 
U_{\nu}=0
\end{eqnarray}
for the Lax-Darboux pair
\begin{eqnarray}\label{lins}
 D_{\tau} 
\Psi=U_{\nu} \Psi, \qquad \cS_\nu \Psi=M_\nu(\bsu) \Psi 
\end{eqnarray}
 is equivalent to the following two equations
\begin{eqnarray}
&& B_1 A=A B, \label{con1}\\
&&A_{\tau }=B_1-B-\frac{1}{2} \left(B_1 Q AQ+Q B_1 QA-QAQB-AQBQ \right) ,
\label{con2}
\end{eqnarray}
where  $B_1=\cS_\nu(B)$.
Indeed, the left hand side of equation (\ref{zero}) is a rational 
matrix function of $\lambda$. This function is vanishing at 
$\lambda=\infty$ and conditions (\ref{con1}) and (\ref{con2}) are equivalent to 
the vanishing of second and first order poles at $\lambda=\pm i\nu$.

It is easy to check that 
$$B=\frac{\gamma}{\langle a Q a_{-1} \rangle}\left(Q a_{-1} \rangle \langle a- 
a \rangle \langle a_{-1} Q\right),\qquad a_{-1} \rangle=\cS_\nu^{-1}(a)\rangle
$$
is a solution of (\ref{con1}) satisfying the conditions (\ref{InvB}) if
$\gamma$ is any real function
of $\tau $. Without any loss of generality we can set $\gamma=1$ (by a point 
transformation). Substituting the expressions of $A$ and $B$ into
(\ref{con2}), we get
\begin{eqnarray*}
&&  Q \left( a_{\tau}+\frac{a_{-1} \rangle}{\langle a Q
a_{-1} \rangle}+\frac{\langle a_1  a\rangle}{2  \langle a_1 Q a
\rangle} Q a\rangle-\frac{ a_1 \rangle }{\langle a_1 Q a
\rangle}-\frac{\langle a a_{-1} \rangle}{2 \langle a Q a_{-1} \rangle}
Q a\rangle 
  \right)\langle a \\
&&\ + Q a\rangle\left(
\langle a_{\tau} -\frac{\langle
a_1}{\langle a_1 Q a\rangle} +\frac{\langle a_1  a\rangle}{2 \langle
a_1 Q a \rangle} \langle a Q+\frac{\langle a_{-1}}{ \langle a Q a_{-1}
\rangle}-\frac{\langle a a_{-1} \rangle}{2 \langle a Q a_{-1} \rangle}
\langle a Q \right)
=0.
\end{eqnarray*}
Hence, there exists a scalar $\rho$ such that
\begin{eqnarray*}
&& a_{\tau}\rangle
+\frac{ a_{-1} \rangle}{\langle a Q
a_{-1} \rangle}+\frac{\langle a_1  a\rangle}{2  \langle a_1 Q a
\rangle} Q a\rangle-\frac{  a_1 \rangle }{\langle a_1 Q a
\rangle}-\frac{\langle a a_{-1} \rangle}{2 \langle a Q a_{-1} \rangle}
Q a\rangle=\rho a\rangle;\\
&& \langle a_{\tau} -\frac{ \langle
a_1}{\langle a_1 Q a\rangle} +\frac{\langle a_1  a\rangle}{2  \langle
a_1 Q a \rangle} \langle a Q+\frac{\langle a_{-1}}{ \langle a Q a_{-1}
\rangle}-\frac{\langle a a_{-1} \rangle}{2 \langle a Q a_{-1} \rangle}
\langle a Q=-\rho \langle a .
\end{eqnarray*}
From the above two identities it follows that $\rho=0$ and
\begin{eqnarray*}
 a_{\tau}\rangle
+\frac{a_{-1} \rangle}{\langle a Q
a_{-1} \rangle}+\frac{\langle a_1  a\rangle}{2  \langle a_1 Q a
\rangle} Q a\rangle-\frac{ a_1 \rangle }{\langle a_1 Q a
\rangle}-\frac{\langle a a_{-1} \rangle}{2 \langle a Q a_{-1} \rangle}
Q a\rangle=0.
\end{eqnarray*}
Due to $\langle a= (i, {\boldsymbol u}^T)$,  the above
equation becomes into
\begin{eqnarray*}
 \boldsymbol u_{\tau}
+\frac{ \boldsymbol u_{-1}}{\boldsymbol u^T  \boldsymbol u_{-1} 
+1}+\frac{\boldsymbol u^T_1 
\boldsymbol u  -1}{2 (\boldsymbol u^T_1  \boldsymbol u +1)}  \boldsymbol u 
-\frac{  \boldsymbol u_1 
}{\boldsymbol u^T_1  \boldsymbol u +1}-\frac{\boldsymbol u^T \boldsymbol 
u_{-1}-1}{2 (\boldsymbol u^T 
\boldsymbol u_{-1}+1)}  \boldsymbol u =0.
\end{eqnarray*}
So the unit vector $\boldsymbol u$ satisfies the equation
\begin{eqnarray}\label{equU}
 \boldsymbol u_{\tau}=2(\cS_\nu-1)\frac{ \boldsymbol u_{-1}+\boldsymbol u 
}{|\boldsymbol u^T+  \boldsymbol u_{-1}|^2
 }.
\end{eqnarray}
We sum up the result in the following theorem:
\begin{The}\label{theorM}
The evolutionary vector equation (\ref{equU}) possesses a Lax-Darboux 
representation
(\ref{lins}) with 
\begin{eqnarray}\label{Mu}
&&M_\nu(\boldsymbol u)=I_{n+2}+\frac{ i \nu}{\lambda-i\nu} Q a\rangle
\langle a
-\frac{ i \nu}{\lambda+i\nu} a\rangle \langle a Q, \quad \langle a= (i, {\boldsymbol u}^T), \ \bsu\in S^n; \\
&&U_{\nu}=\frac{i \nu}{\lambda -i \nu} \frac{Q a_{-1} \rangle \langle a- a
\rangle \langle a_{-1} Q}{\langle a Q a_{-1} \rangle} -\frac{i \nu}{\lambda +i
\nu} \frac{ a_{-1} \rangle \langle a Q- Q a
\rangle \langle a_{-1} }{\langle a Q a_{-1} \rangle},\nonumber
\end{eqnarray}
where  $I_{n+2}$ is the identity matrix of size $n+2$,
$Q=\mbox{diag}(-1,1,\ldots,1)$ and $\nu\in \mathbb{R}$ is a constant.
\end{The}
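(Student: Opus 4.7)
The plan is to verify directly that the compatibility condition (\ref{zero}) for the pair (\ref{lins}) holds as an identity in $\lambda$ precisely when $\bsu$ satisfies the evolutionary equation (\ref{equU}). The structural input is the reduction group: both $M_\nu(\bsu)$ and $\cU_\nu$ are rational in $\lambda$ with simple poles only at $\lambda=\pm i\nu$ and are invariant under the three involutions $\i$, $r$, $s$. Consequently, the left-hand side of (\ref{zero}) is a rational matrix function vanishing at $\lambda=\infty$ whose only possible singularities are at $\lambda=\pm i\nu$, and by the $s$-reduction it is enough to inspect the behaviour at $\lambda=i\nu$.

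First I would expand (\ref{zero}) as a Laurent series around $\lambda=i\nu$. The coefficient of $(\lambda-i\nu)^{-2}$ collapses to (\ref{con1}), i.e.\ $B_1 A = A B$, and the coefficient of $(\lambda-i\nu)^{-1}$ collapses to (\ref{con2}). The remaining poles at $\lambda=-i\nu$ vanish automatically by the $s$-reduction together with (\ref{InvB}), so the theorem reduces to solving (\ref{con1})--(\ref{con2}) for a matrix $B$ satisfying (\ref{InvB}).

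Second, I would search for $B$ in the class of rank-$2$ matrices dictated by the rank-one structure $A=Q a\rangle\langle a$ and its shift $A_1=Qa_1\rangle\langle a_1$. The natural ansatz is
\[
 B=\frac{1}{\langle a Q a_{-1}\rangle}\bigl(Qa_{-1}\rangle\langle a - a\rangle\langle a_{-1}Q\bigr),
\]
which a short direct computation shows satisfies both $B_1 A = A B$ and the reality/antisymmetry constraints (\ref{InvB}); here one uses $\langle a Q a\rangle=|\bsu|^2-1=0$ and the explicit form $\langle a=(i,\bsu^T)$. The overall normalization $\gamma=1$ is fixed by a point transformation in $\tau$.

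Third, I would substitute $A$ and $B$ into (\ref{con2}). Thanks to the outer-product structure of $A$, the residue equation takes the factored form $Q X\rangle\langle a + Qa\rangle\langle Y=0$, so that $X\rangle=\rho a\rangle$ and $\langle Y=-\rho\langle a$ for some scalar $\rho$. Pairing these with $\langle a Q$ and $Q a\rangle$ and again using $\langle a Q a\rangle=0$ forces $\rho=0$; the surviving ket-equation, after reading off the $\bsu$-components from $\langle a=(i,\bsu^T)$ and simplifying the scalar coefficients $\langle a_1 Q a\rangle=1+\bsu_1^T\bsu$ and $\langle a Q a_{-1}\rangle=1+\bsu^T\bsu_{-1}$, collapses into exactly (\ref{equU}). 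The main obstacle is precisely this last step: organizing the telescoping combinations of $\langle a_1 a\rangle$, $\langle a_1 Q a\rangle$ and $\langle a Q a_{-1}\rangle$ so that they coalesce into the clean difference $2(\cS_\nu-1)\,(\bsu_{-1}+\bsu)/|\bsu_{-1}+\bsu|^2$, which is purely algebraic once the ansatz is in place.
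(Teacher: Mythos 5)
Your proposal follows essentially the same route as the paper: reduce the compatibility condition (\ref{zero}) to the pole conditions (\ref{con1})--(\ref{con2}), take the same rank-two ansatz for $B$ with $\gamma=1$, and extract (\ref{equU}) from the factored residue equation via the scalar $\rho=0$. One small slip: the identity you invoke should be $\langle a\, a\rangle=i^2+|\bsu|^2=0$, not $\langle a\,Q\,a\rangle=0$ --- in fact $\langle a\,Q\,a\rangle=1+|\bsu|^2=2$, and quantities of the form $\langle a\,Q\,a_{-1}\rangle$ sit in denominators, so they must not vanish; with that correction the verification of (\ref{con1}), the constraints (\ref{InvB}), and the argument forcing $\rho=0$ all go through exactly as in the paper.
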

Equation (\ref{equU}) is is known as the discrete Heisenberg spin chain
\cite{ragsan},
whose link with discrete geometry was studied in \cite{bobsu}.

It follows from (\ref{darM}) that  the unit vector
$\boldsymbol u$ is related to the unit vector $\bsv$ by 
\begin{eqnarray}\label{uv}
{\boldsymbol u}=\frac{\boldsymbol v_1+\boldsymbol v}{|\boldsymbol 
v_1+\boldsymbol v|},\qquad \boldsymbol v=\left(\begin{array}{c}
\beta\\\boldsymbol \alpha\end{array}\right) .
\end{eqnarray}
In the following proposition, we give the evolutionary equation for the unit 
vector $\boldsymbol v$.
\begin{Pro}
Equation (\ref{equU}) and equation
\begin{eqnarray}\label{equV}
\boldsymbol v_\tau =\frac{|\boldsymbol v_{-1}+\boldsymbol v|^2 (\boldsymbol 
v_1+\boldsymbol v)-|\boldsymbol v_1+\boldsymbol v|^2
(\boldsymbol v_{-1}+\boldsymbol v)}{( \boldsymbol v^T_1+\boldsymbol v^T)
(\boldsymbol v_{-1}+\boldsymbol v) +|\boldsymbol
v_{-1}+\boldsymbol v||\boldsymbol v_1+\boldsymbol v|}
\end{eqnarray}
are related by the Miura transformation (\ref{uv}).
\end{Pro}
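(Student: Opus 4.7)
The plan is to differentiate the Miura map (\ref{uv}) with respect to $\tau$, substitute the evolution (\ref{equV}) on the right, and show that the resulting expression collapses to the right-hand side of (\ref{equU}). The key simplification is to introduce the abbreviations $p=|\bsv_1+\bsv|$ and $q=|\bsv_{-1}+\bsv|$, so that $\bsv_1+\bsv=p\bsu$ and $\bsv_{-1}+\bsv=q\bsu_{-1}$, and to work throughout in the unit vectors $\bsu,\bsu_{\pm1}$ rather than in $\bsv$.

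With these substitutions the right-hand side of (\ref{equV}) factors immediately: its numerator becomes $pq(q\bsu-p\bsu_{-1})$, while its denominator $( \bsv_1^T+\bsv^T)( \bsv_{-1}+\bsv)+qp$ becomes $pq(1+\bsu^T\bsu_{-1})$, and (\ref{equV}) reduces to the compact form $\bsv_\tau=(q\bsu-p\bsu_{-1})/(1+\bsu^T\bsu_{-1})$. Applying $\cS_\nu$ yields an analogous expression for $\bsv_{1,\tau}$ in terms of $\bsu_1,\bsu$ and the adjacent lengths. Differentiating the Miura formula then gives $\bsu_\tau=p^{-1}(I_{n+2}-\bsu\bsu^T)(\bsv_1+\bsv)_\tau$, since $p_\tau=\bsu^T(\bsv_1+\bsv)_\tau$. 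The projector $I_{n+2}-\bsu\bsu^T$ annihilates every term proportional to $\bsu$ in the sum $\bsv_\tau+\bsv_{1,\tau}$, and the factor $p$ surviving in each remaining term cancels the prefactor $p^{-1}$.

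What is left is the tidy expression
\begin{equation*}
\bsu_\tau=\frac{\bsu_1-\bsu\bsu^T\bsu_1}{1+\bsu^T\bsu_1}-\frac{\bsu_{-1}-\bsu\bsu^T\bsu_{-1}}{1+\bsu^T\bsu_{-1}}.
\end{equation*}
To identify this with (\ref{equU}) I would apply the algebraic identity $\bsu_k-\bsu(\bsu^T\bsu_k)=(\bsu_k+\bsu)-(1+\bsu^T\bsu_k)\bsu$, which rewrites each fraction as $(\bsu_k+\bsu)/(1+\bsu^T\bsu_k)-\bsu$; the two constant $\bsu$ terms cancel between the two contributions. Since $|\bsu+\bsu_{\pm1}|^2=2(1+\bsu^T\bsu_{\pm1})$ for unit vectors, the remaining telescoping difference is precisely $2(\cS_\nu-1)(\bsu+\bsu_{-1})/|\bsu+\bsu_{-1}|^2$, which is (\ref{equU}). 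I do not foresee a conceptual obstacle: once the substitutions $\bsv_1+\bsv=p\bsu$ and $\bsv_{-1}+\bsv=q\bsu_{-1}$ are made, every manipulation is linear or elementary. The only delicate point is bookkeeping of the shifted lengths when assembling $\bsv_\tau+\bsv_{1,\tau}$, and this is rendered mechanical by the projection onto $\bsu^\perp$.
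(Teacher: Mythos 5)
Your proposal is correct and follows essentially the same route as the paper's own proof: rewrite (\ref{equV}) in terms of the unit vectors $\bsu,\bsu_{\pm 1}$ via $\bsv_1+\bsv=|\bsv_1+\bsv|\,\bsu$, differentiate the Miura map, and use the projection $I-\bsu\bsu^T$ (which should be of size $n+1$, not $n+2$ --- a harmless index slip) to kill the terms proportional to $\bsu$ and recover (\ref{equU}). No substantive differences.
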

\begin{proof}
Using (\ref{uv}), we can rewrite (\ref{equV}) into
\begin{eqnarray*}
 \boldsymbol v_\tau =\frac{|\boldsymbol v_{-1}+\boldsymbol v|^2 |\boldsymbol 
v_1+\boldsymbol v| \boldsymbol u-|\boldsymbol
v_1+\boldsymbol v|^2
|\boldsymbol v_{-1}+\boldsymbol v| \boldsymbol u_{-1}}{( \boldsymbol u^T
\boldsymbol  u_{-1}+1)|\boldsymbol
v_{-1}+\boldsymbol v||\boldsymbol v_1+\boldsymbol v|}
=\frac{|\boldsymbol v_{-1}+\boldsymbol v| \boldsymbol u-|\boldsymbol
v_1+\boldsymbol v| \boldsymbol u_{-1}}{ \boldsymbol u^T 
\boldsymbol u_{-1}+1} .
\end{eqnarray*}
This leads to 
\begin{eqnarray*}
 \frac{\boldsymbol v_{1\tau}+\boldsymbol v_\tau}{|\boldsymbol v_1+\boldsymbol 
v|}=\frac{
\boldsymbol u_1}{\boldsymbol u^T\boldsymbol u_{1}+1}-\frac{|\boldsymbol
v_2+\boldsymbol v_1| \boldsymbol u}{(\boldsymbol u^T \boldsymbol u_{1} 
+1)|\boldsymbol v_1+\boldsymbol
v|}+\frac{|\boldsymbol v_{-1}+\boldsymbol v| \boldsymbol u}{(
\boldsymbol u^T\boldsymbol u_{-1}
+1)|\boldsymbol v_1+\boldsymbol v|}-\frac{ \boldsymbol u_{-1}}{
\boldsymbol u^T \boldsymbol u_{-1}+1}
\end{eqnarray*}
Substituting this into $\boldsymbol u_\tau$ and using the fact that 
$\boldsymbol u$ is the
unit vector, we have
\begin{eqnarray*}
 \boldsymbol u_\tau=\frac{\boldsymbol v_{1\tau}+\boldsymbol 
v_\tau}{|\boldsymbol v_1+\boldsymbol
v|}- \boldsymbol u^T \frac{\boldsymbol v_{1\tau}+\boldsymbol 
v_\tau}{|\boldsymbol v_1+\boldsymbol v|}
\boldsymbol u=\frac{\boldsymbol u_1+\boldsymbol u}{ \boldsymbol u^T 
\boldsymbol u_{1} +1}-\frac{\boldsymbol
u+  \boldsymbol u_{-1}}{ \boldsymbol u^T \boldsymbol 
u_{-1}+1}=2(\cS_\nu-1)\frac{ \boldsymbol u_{-1}+\boldsymbol u 
}{|\boldsymbol u^T+  \boldsymbol u_{-1}|^2
 }
\end{eqnarray*}
and thus we proved the statement.
\end{proof}
Both equations (\ref{alphaeqq}) and (\ref{equV}) are obtained from
the same Darboux matrix $M_{\nu}$. Thus they share the same
generalised symmetries and conserved densities derived from the zero curvature
conditions \cite{mr86h:58071,mik12, mik13, MPW}. Therefore, equation
(\ref{equV}) can
be viewed
as a symmetry of the nonevolutionary equations (\ref{alphaeqq}) and 
(\ref{backx}).

In 2008, Adler presented the classification of isotropic integrable
Volterra-type 
lattices on the sphere \cite{Adler08a}.
The author gave a list including six integrable equations denoted by V1--V6.
Equation (\ref{equU}) is the V6  when $\delta=1$ in the list.
Equation (\ref{equV}) after scaling of $\tau$ is the V5 when $\epsilon=1$ and 
$k=2$. 
Here we established the relation between these two equations.

As pointed out by Adler \cite{Adler08a}, equation (\ref{equU}) is integrable
without the constraint 
vector $\boldsymbol u$ being the unit vector. Introducing the transformation 
$$\boldsymbol w=\frac{ \boldsymbol u_{-1} +\boldsymbol u}{ |\boldsymbol 
u_{-1}+\boldsymbol u|^2},$$
we have
\begin{eqnarray}\label{equW}
 \boldsymbol w_{\tau}=2( \boldsymbol w^T\boldsymbol w) (\boldsymbol 
w_1-\boldsymbol w_{-1})-4 \boldsymbol w^T (\boldsymbol w_1-\boldsymbol
w_{-1}) \boldsymbol w,
\end{eqnarray}
which is one of vector modified Volterra lattices recently discussed in 
\cite{Adler08b}. 

Equation (\ref{equW}) is well studied in \cite{Adler99}, as an example of 
integrable Jordan triple.  It is a Hamiltonian system with the Hamiltonian 
operator is
\begin{eqnarray*}
 \cH_{\boldsymbol w}=(\boldsymbol w^T \boldsymbol w-2 \boldsymbol 
w {\boldsymbol w}^T) (\cS-\cS^{-1}) (\boldsymbol w^T \boldsymbol 
w-2 \boldsymbol w {\boldsymbol w}^T).
\end{eqnarray*}
and the Hamiltonian function is $\rho_{\boldsymbol w}=-\ln |\boldsymbol w|$. The 
authors also gave its master symmetry 
\begin{eqnarray*}
 \tau_{\boldsymbol w}=(n-1) \boldsymbol w_{\tau}+2 (\boldsymbol w^T \boldsymbol w)
\boldsymbol w_1-4 (\boldsymbol w^T
 \boldsymbol w_1) \boldsymbol w
\end{eqnarray*}
to generate infinitely many commuting generalised symmetries and conserved
densities starting from $\boldsymbol w_\tau$ and $\rho_{\boldsymbol w}$.

Using the Miura transformations, we are able to write down the corresponding 
Hamiltonian operators \cite{kmw13}, Hamiltonian functions and master symmetries
for equations (\ref{equU}) and (\ref{equV})  as follows:
\begin{eqnarray*}
&&{\cal H}_{\boldsymbol u}=(\cS-1) (\cS+1)^{-1}, \quad \rho_{\boldsymbol 
u}=\ln(|\boldsymbol u_1+\boldsymbol u|);\\
&&\tau_{\boldsymbol u}=n \boldsymbol u_{\tau} +\frac{ \boldsymbol u_{1}+\boldsymbol u}{ 
|\boldsymbol u_1+\boldsymbol u|^2};
\\&&\tau_{\boldsymbol v}=n {\bsv}_{\tau}; \qquad \rho_{\boldsymbol
v}=\ln\left(\frac{(\boldsymbol v^T_1+\boldsymbol v^T)(\boldsymbol v_{-1}+\boldsymbol v)}
{|\boldsymbol v_{-1}+\boldsymbol v||\boldsymbol v_1+\boldsymbol v|}+1\right),\label{vmaster}
\end{eqnarray*}
where we use the lower-index $\bsu$ and $ \bsv$ to indicate their 
correspondences to  equations (\ref{equU}) and (\ref{equV}) respectively.
Using the above master symmetry, we are able to compute
higher order symmetries for (\ref{equV}) sharing with both (\ref{alphaeqq}) and 
(\ref{backx}). Recently a new method for construction of master symmetries 
of homogeneous integrable evolution equations (the ${\cal O}$-scheme) was
proposed
in \cite{wang15}. It would be very useful to extend the ${\cal O}$-scheme to
the classes of equations studied in this paper.

For the scalar sine-Gordon equation (\ref{ssg}), the local flow (\ref{equV})
becomes
\begin{eqnarray*}
\theta_{\tau}=\frac{\cos\frac{\theta_{-1}-\theta}{2} \sin\frac{\theta_1+\theta}{2}-\cos\frac{\theta_1-\theta}{2}\sin\frac{\theta_{-1}+\theta}{2}}
{\cos\theta \left(\cos\frac{\theta_1-\theta_{-1}}{2}+1\right)}=\tan \frac{\theta_1-\theta_{-1}}{4} ,
\end{eqnarray*}
which appeared in \cite{NC} as a differential-difference version of the modified Korteweg-de Vries equation.

\section{Yang-Baxter map, integrable partial difference systems and the problem 
of re-factorisation}\label{sec5}

It is well known that the problem of re-factorisation of a product of Darboux 
matrices can be associated with the construction of Yang-Baxter maps 
\cite{vesur, kulpap, sotirmik}. A Darboux (or discrete Lax)
representation for integrable partial difference equations can also be seen as 
the same problem of re-factorisation.

\subsection{The Yang-Baxter map associated with the vector sine-Gordon
equation.}
Let us consider a product of two Darboux matrices and the problem of 
re-factorisation
\begin{equation}\label{refac}
 M_\nu(  \bx)M_\mu(  \by)=M_\mu(  \bY)M_\nu(  
\bX),\qquad \mu\ne\pm\nu,\quad \bx,\by,\bX,\bY\in S^n,
\end{equation}
where  $M_\nu(\bsu)$ is a Darboux matrix (\ref{Mu}) 
defined in Theorem \ref{theorM}.  On the left hand side of (\ref{refac}) the
first and second 
factors are a matrices with poles at $\pm \nu$ and $\pm \mu$ respectively. 
On the right hand side the order of the factors is opposite. If this problem 
of 
a re-factorisation has a unique solution, namely for given $\bx,\by$ one can 
find vectors $\bX,\bY$  uniquely,
then it defines a map $ Y_{\nu,\mu}: S^n\times S^n\mapsto S^n\times S^n$,
\begin{equation}
Y_{\nu,\mu}:(\bx,\by)\mapsto 
\left(\bX(\bx,\by;\nu,\mu),\bY(\bx,\by;\nu,\mu)\right),\qquad 
\bx,\by\in S^n.\label{ybmap} 
\end{equation}
Starting from the map $Y_{\nu,\mu}$ we define maps $Y_{\nu,\mu}^{ij}:S^n\times 
S^n\times 
S^n\mapsto S^n\times S^n\times S^n,\ i<j$
as follows:
\begin{eqnarray*}
\hspace{2.3cm} 
Y_{\nu,\mu}^{12}\,:\,(\bx,\by,\bz)&\mapsto&(\bX(\bx,\by;\nu,\mu),\bY(\bx,\by;\nu
, \mu) ,\bz), \\ 
\hspace{2.3cm} 
Y_{\nu,\kappa}^{13}\,:\,(\bx,\by,\bz)&\mapsto&(\bX(\bx,\bz;\nu,\kappa),\by,
\bY(\bx, \bz;\nu, \kappa)), \\ 
\hspace{2.3cm} 
Y_{\mu,\kappa}^{23}\,:\,(\bx,\by,\bz)&\mapsto&(\bx,\bX(\by,\bz;\mu,\kappa),
\bY(\by, \bz;\mu, \kappa)).
\end{eqnarray*}
\begin{Def}
The map $Y_{\nu,\mu}$ is called a Yang-Baxter map, if it 
satisfies the Yang-Baxter relation
\begin{equation}\label{YB}
 Y_{\nu,\mu}^{12}\circ Y_{\nu,\kappa}^{13}\circ 
Y_{\mu,\kappa}^{23}=Y_{\mu,\kappa}^{23}\circ Y_{\nu,\kappa}^{13}\circ 
Y_{\nu,\mu}^{12}.
\end{equation}
\end{Def}
It is easy to show that the map (\ref{ybmap}) associated with the problem of 
re-factorisation (\ref{refac}) satisfies the Yang-Baxter relation if the 
factorisation of the triple product $M_\nu(  \bx)M_\mu(\by)M_\kappa(  \bz)$ 
is unique \cite{ves04, kulpap}.
\begin{The}\label{YBo}
 Suppose that
 \begin{enumerate}
  \item[{\rm (i)}] the re-factorisation problem (\ref{refac}) defines a unique map $Y_{\nu,\mu}$;
  \item[{\rm (ii)}] for three different real numbers $\mu,\nu, \kappa$ satisfying $\mu\ne-\nu$, $\mu\ne-\kappa$ and $\kappa\ne-\nu$ it follows from
  \[
  M_\nu(  \bx)M_\mu(  
\by)M_\kappa(  \bz)=M_\nu(  \bX)M_\mu(  
\bY)M_\kappa(  \bZ)
 \]
  that 
  $ 
   \bx=\bX,\quad \by=\bY,\quad \bz=\bZ,
  $ that is, the factorisation is unique.
\end{enumerate}
Then the corresponding map $Y_{\nu,\mu}$ is Yang-Baxter.
\end{The}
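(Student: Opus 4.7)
The proof follows the standard matrix re-factorisation argument originating with Veselov. The plan is to interpret each elementary map $Y^{ij}$ as a local re-factorisation inside a triple product of Darboux matrices $M_\nu(\bx)M_\mu(\by)M_\kappa(\bz)$, then show that both compositions in the Yang--Baxter relation produce two distinct factorisations of the same matrix in the parameter order $(\kappa,\mu,\nu)$, and finally invoke the uniqueness hypothesis (ii) to conclude that the resulting data triples coincide.

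In more detail, I would first set up the bookkeeping: assign to each ``particle'' a fixed parameter ($\nu,\mu,\kappa$), identify each triple $(\bx,\by,\bz)\in (S^n)^3$ with the matrix $M_\nu(\bx)M_\mu(\by)M_\kappa(\bz)$, and, whenever two adjacent factors $M_p(\bu)M_q(\bv)$ are swapped by an application of $Y^{ij}$, replace them by $M_q(\bV)M_p(\bU)$ where $(\bU,\bV)=Y_{p,q}(\bu,\bv)$ comes from the defining re-factorisation (\ref{refac}). Since each swap is a genuine matrix identity, the overall product stays equal to the initial matrix $M_\nu(\bx)M_\mu(\by)M_\kappa(\bz)$ throughout the procedure.

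Next I would evaluate both sides of the Yang--Baxter relation (\ref{YB}) explicitly. Applying $Y^{23}$, then $Y^{13}$, then $Y^{12}$ to $(\bx,\by,\bz)$ corresponds at the matrix level to the sequence of adjacent transpositions $(123)\to(132)\to(312)\to(321)$ of particles, yielding a factorisation of the form
\[
M_\nu(\bx)M_\mu(\by)M_\kappa(\bz)=M_\kappa(\bY_2)M_\mu(\bY_3)M_\nu(\bX_3),
\]
where $(\bX_3,\bY_3,\bY_2)$ is the triple produced by the left-hand side of (\ref{YB}). The opposite order of applications yields the dual sequence $(123)\to(213)\to(231)\to(321)$ of transpositions and a second factorisation
\[
M_\nu(\bx)M_\mu(\by)M_\kappa(\bz)=M_\kappa(\bY_3')M_\mu(\bX_3')M_\nu(\bX_2'),
\]
with the triple $(\bX_2',\bX_3',\bY_3')$ being the right-hand side of (\ref{YB}).

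The final step is immediate: equating the two factorisations and invoking hypothesis (ii) with the relabeled parameters $(\kappa,\mu,\nu)$, which still satisfies the sign conditions since $\mu\ne-\nu$, $\mu\ne-\kappa$, $\kappa\ne-\nu$ are symmetric in the three parameters, forces $\bY_2=\bY_3'$, $\bY_3=\bX_3'$, $\bX_3=\bX_2'$, which is exactly the identity of the two output triples. The only real subtlety I anticipate is the bookkeeping of positions versus particle labels during the successive re-factorisations; everything else is a direct consequence of (i) and (ii).
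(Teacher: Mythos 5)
Your proposal is correct and follows essentially the same route as the paper: both evaluate the two compositions in the Yang--Baxter relation as chains of adjacent re-factorisations of the triple product $M_\nu(\bx)M_\mu(\by)M_\kappa(\bz)$, observe that both chains terminate in a factorisation with parameter order $(\kappa,\mu,\nu)$, and then invoke the uniqueness hypothesis (ii) to identify the two output triples. Your explicit remark that the hypotheses of (ii) are symmetric under relabelling of the parameters is a small point the paper leaves implicit, but otherwise the arguments coincide.
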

\begin{proof}
 For any $\bx,\by,\bz\in S^n$ it follows from (\ref{refac}) that
 {\small
 \[
  \begin{array}{ccccccc}
M_\nu(\bx)M_\mu(\by)M_\kappa(\bz)\!\!\!\!&\!\!=\!\!&\!\!\!\!M_\mu(\bY)M_\nu(\bX)
M_\kappa(\bz)\!\!\!\!&\!\!= 
\!\!\!\!&\!\! 
M_\mu(\bY)M_\kappa(\bZ)M_\mu(\tilde{\bX}
)\!\!\!\!&\!\!=\!\!\!\!&\!\!M_\kappa(\tilde{\bZ})M_\mu(\tilde{\bY})M_\nu(\tilde{\bX})\\
(\bx,\by,\bz)&\overset{Y_{\nu,\mu}^{12}}{\longmapsto}&(\bX,\bY,\bz)&
\overset{Y_{\nu,\kappa}^{13}}{\longmapsto}&(\tilde{\bX},\bY,\bZ
)&\overset { Y_ { \mu , \kappa } ^ { 23 } }
{ \longmapsto} &(\tilde{\bX},\tilde{\bY},\tilde{\bZ})
  \end{array}
 \]
}
Swapping the matrices in a different order we get
{\small
 \[
  \begin{array}{ccccccc}
M_\nu(\bx)M_\mu(\by)M_\kappa(\bz)\!\!\!\!&\!\!=\!\!\!\!&\!\!M_\nu(\bx)M_\kappa(\bZ)M_\mu(\bY)\!\!\!\!&\!\!= 
\!\!\!\!&\!\!
M_\kappa(\hat \bZ)M_\nu(\bX)M_\mu(\bY
)\!\!\!\!&\!\!=\!\!\!\!&\!\!M_\kappa(\hat{\bZ})M_\mu(\hat{\bY})M_\nu(\hat{\bX})\\
(\bx,\by,\bz)&\overset{Y_{\mu,\kappa}^{23}}{\longmapsto}&(\bx,\bY,\bZ)&
\overset{Y_{\nu,\kappa}^{13}}{\longmapsto}&({\bX},\bY,\hat \bZ
)&\overset { Y_ { \nu , \mu } ^ { 12 } }
{ \longmapsto} &(\hat{\bX},\hat{\bY},\hat{\bZ})
  \end{array}
 \]
}
Thus we have
\[
 Y_{\mu,\kappa}^{23}\circ Y_{\nu,\kappa}^{13}\circ 
Y_{\nu,\mu}^{12}\, :\,(\bx,\by,\bz)\mapsto 
(\tilde{\bX},\tilde{\bY},\tilde{\bZ}),\quad  Y_{\nu,\mu}^{12}\circ 
Y_{\nu,\kappa}^{13}\circ 
Y_{\mu,\kappa}^{23}\,:\,(\bx,\by,\bz)\mapsto (\hat{\bX},\hat{\bY},\hat{\bZ}).
\]
The uniqueness of the factorisation that 
$(\hat{\bX},\hat{\bY},\hat{\bZ})=
(\tilde{\bX},\tilde{\bY},\tilde{\bZ})$ leads to the required Yang-Baxter relation (\ref{YB})
and thus we complete the proof.
\end{proof}

Let us compute the Yang-Baxter map (\ref{ybmap}) corresponding to the Darboux 
matrix (\ref{Mu}).
 
\begin{The}\label{thYB}
 The Yang-Baxter map (\ref{ybmap}) corresponding to the Darboux matrix 
(\ref{Mu}) is of the form 
\begin{eqnarray}\label{XX}
 \bX(\bx,\by;\nu,\mu)&=&\frac{(\nu^2-\mu^2) \bx+2\mu (\nu+\mu 
(\bx^T\by))\by}{\nu^2+\mu^2+2\mu\nu (\bx^T\by)}   ,\\ \label{YY}
 \bY(\bx,\by;\nu,\mu)&=&\frac{(\mu^2-\nu^2) \by+2\nu (\mu+\nu 
(\bx^T\by))\bx}{\nu^2+\mu^2+2\mu\nu (\bx^T\by)},
\end{eqnarray}
where $\mu\ne\pm\nu$ are constant and $\bx,\by,\bX,\bY\in S^n$.  The map is 
involutive $ Y_{\nu,\mu}\circ Y_{\nu,\mu}=\id$ and 
it has the invariant $\bx^T \by=\bX^T \bY$.
\end{The}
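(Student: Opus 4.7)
The plan is to solve the re-factorisation problem (\ref{refac}) by matching the singular parts of both sides as rational functions of $\lambda$. Each matrix $M_\kappa$ tends to $I_{n+2}$ as $\lambda\to\infty$ and is rational with simple poles only at $\pm i\kappa$, so both products have simple poles only at $\pm i\nu,\pm i\mu$ and agree at infinity; by a Liouville-type argument it suffices to match the four residues. Moreover, each $M_\kappa$ is invariant under the $s$-reduction ($QM_\kappa(-\lambda)Q=M_\kappa(\lambda)$), hence so are both products, and matching residues at $+i\nu$ (resp.\ $+i\mu$) automatically forces matching at $-i\nu$ (resp.\ $-i\mu$). Thus I need only work at the two positive poles.

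At $\lambda=i\nu$, only $M_\nu(\bx)$ on the LHS and $M_\nu(\bX)$ on the RHS contribute a pole, and equating residues yields the rank-one identity
\[
Q x\rangle\langle x\cdot M_\mu(\by)|_{\lambda=i\nu} \;=\; M_\mu(\bY)|_{\lambda=i\nu}\cdot Q X\rangle\langle X .
\]
Comparing the row-factors of the two outer products gives $\langle X=\kappa\,\langle x\cdot M_\mu(\by)|_{\lambda=i\nu}$ for some scalar $\kappa$, which is fixed uniquely by the normalisation $\langle X=(i,\bX^T)$. Using the elementary identities $\langle xQy\rangle=1+\bx^T\by$ and $\langle xy\rangle=-1+\bx^T\by$, the first component of $\langle x\cdot M_\mu(\by)|_{\lambda=i\nu}$ simplifies to $i(\nu^2+\mu^2+2\mu\nu\bx^T\by)/(\nu^2-\mu^2)$, so $\kappa=(\nu^2-\mu^2)/(\nu^2+\mu^2+2\mu\nu\bx^T\by)$ and the remaining entries give formula (\ref{XX}). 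Matching at $\lambda=i\mu$ is analogous but on the column side: $Y\rangle\propto Q\,M_\nu(\bx)|_{\lambda=i\mu}\,Qy\rangle$, which after the corresponding normalisation yields (\ref{YY}).

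Two algebraic consistency checks complete the construction. First, $\bX\in S^n$: expanding $\bX^T\bX$ with $p:=\bx^T\by$ produces $(\nu^2-\mu^2)^2+4\mu(\nu^2-\mu^2)(\nu+\mu p)p+4\mu^2(\nu+\mu p)^2$, which factors as the perfect square $D^2$, where $D=\nu^2+\mu^2+2\mu\nu p$; likewise $|\bY|=1$. Second, the invariant: a similar expansion of $\bX^T\bY$, after using $|\bx|=|\by|=1$, collapses to $p\cdot D^2/D^2=p$. Involutivity then follows from the invariant: since $\bX^T\bY=\bx^T\by$, the denominator $D$ is preserved under the map, and substituting (\ref{XX})--(\ref{YY}) into themselves the numerator of $Y_{\nu,\mu}(\bX,\bY)_1$ simplifies, by the same square-completion identity $(\nu^2-\mu^2)^2+4\mu\nu(\nu+\mu p)(\mu+\nu p)=D^2$, to $D^2\bx/D$, producing $\bx$ back; the analogous calculation returns $\by$.

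The main obstacle is the bookkeeping in these algebraic simplifications. The three non-trivial identities---the normalisation scalar $\kappa$, the unit-length check $|\bX|=1$, and the invariant $\bX^T\bY=\bx^T\by$---are all variations of completing the square to the common expression $[(\nu^2+\mu^2)+2\mu\nu p]^2$; recognising this pattern is the organising principle that keeps the whole computation tractable and explains why the same denominator $D$ appears in (\ref{XX}) and (\ref{YY}).
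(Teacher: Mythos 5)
Your derivation of the formulas (\ref{XX})--(\ref{YY}) follows essentially the paper's route: take residues of (\ref{refac}) at $\lambda=i\nu$ and $\lambda=i\mu$, use the rank-one structure of the residues to peel off the row (resp.\ column) factor up to a scalar, and fix the scalar from the normalisation of the first component of $\langle a=(i,\bsu^T)$; your scalar agrees with the paper's $\Gamma=(\nu^2-\mu^2)/(\nu^2+\mu^2+2\mu\nu\,\bx^T\by)$. The invariant $\bx^T\by=\bX^T\bY$ is checked the same way, and your involutivity argument by direct re-substitution using the square-completion identity $(\nu^2-\mu^2)^2+4\mu\nu(\nu+\mu p)(\mu+\nu p)=(\nu^2+\mu^2+2\mu\nu p)^2$ is a component-wise version of the paper's slicker observation that $(\bX,\bY)=(\bx,\by)P(\bx,\by)$ with $P(\bX,\bY)=P(\bx,\by)$ and $P^2=I_2$. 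Those parts are correct.

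There is, however, a genuine omission: you never establish that the map you have constructed is actually a \emph{Yang-Baxter} map, which is part of what the theorem asserts. By Theorem \ref{YBo} this requires showing that the factorisation of a triple product $M_\nu(\bx)M_\mu(\by)M_\kappa(\bz)$ into factors with prescribed pole positions is unique. The paper does this at the end of its proof: using $(M_\nu(\bx))^{-1}=M_\nu^T(\bx)$, it multiplies through and observes that the second-order pole of $M_\nu^T(\bx)M_\nu(\bX)$ at $\lambda=i\nu$ vanishes only if $\bx^T\bX=1$, which for unit vectors forces $\bx=\bX$, and then iterates. Without this step the Yang-Baxter relation (\ref{YB}) is not proved. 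A second, smaller point: your Liouville argument reduces (\ref{refac}) to the two residue identities, but at each pole the residue identity is a full matrix equation of the form (\ref{rfeq}), equivalent to the \emph{pair} of conditions (\ref{eq1}) and (\ref{eq2}) with a common scalar; you impose only the row-factor condition at $i\nu$ and only the column-factor condition at $i\mu$ to \emph{derive} $\bX$ and $\bY$, which gives necessity but not sufficiency. The paper closes this by verifying that substituting (\ref{pma}) back into (\ref{rfeq}) yields an identity; you should add the analogous check (or verify the remaining two factor conditions with consistent scalars) to confirm that your $\bX,\bY$ really solve (\ref{refac}).
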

\begin{proof} We derive the map using the  re-factorisation of two Darboux 
matrices (\ref{refac}) and adapting the notations
\[
 a\rangle=\left(\begin{array}{c}
           i\\ \bX
          \end{array}\right)
,\quad b\rangle=\left(\begin{array}{c}
           i\\ \bY
          \end{array}\right),\quad 
\tilde{a}\rangle=\left(\begin{array}{c}
           i\\ \bx
          \end{array}\right),\quad \tilde{b}\rangle=\left(\begin{array}{c}
           i\\ \by
          \end{array}\right) .
\]
At $\lambda=\infty$ equation (\ref{refac}) is satisfied.
Taking 
the residue of  (\ref{refac}) at $\lambda=i\nu$ we get 
\begin{equation}
 \label{rfeq}
Q\tilde{a}\rangle\langle\tilde{a}\left(I+\frac{\mu}{\nu-\mu}Q\tilde{b}
\rangle\langle\tilde{b}-\frac{\mu}{\nu+\mu}\tilde{b}
\rangle\langle\tilde{b}Q\right)=\left(I+\frac{\mu}{\nu-\mu}Q{b}
\rangle\langle{b}-\frac{\mu}{\nu+\mu}{b}
\rangle\langle{b}Q\right)Qa\rangle\langle a ,
\end{equation}
which implies
\begin{eqnarray}\label{eq1}
 &&Q\tilde{a}\rangle= \left(I+\frac{\mu}{\nu-\mu}Q{b}
\rangle\langle{b}-\frac{\mu}{\nu+\mu}{b}
\rangle\langle{b}Q\right)Qa\rangle\Gamma \\ \label{eq2}
&&\langle a=\Gamma \langle\tilde{a}\left(I+\frac{\mu}{\nu-\mu}Q\tilde{b}
\rangle\langle\tilde{b}-\frac{\mu}{\nu+\mu}\tilde{b}
\rangle\langle\tilde{b}Q\right)
\end{eqnarray}
for some scalar $\Gamma\ne 0$.
Taking the first entry in (\ref{eq2}) we find that 
\begin{equation}
 \label{Gammaa}
 \Gamma=\frac{\nu^2-\mu^2}{\nu^2+\mu^2+2\mu\nu (\bx^T\by)}.
\end{equation}
Substitution of (\ref{Gammaa}) in (\ref{eq2}) leads to (\ref{XX}). Similarly, 
 the residue of (\ref{refac}) at $\lambda=i\mu$ yields (\ref{YY}).
 
By direct calculation, we can check that $\bx^T\by=\bX^T\bY$ using the fact $\bx, \by\in S^n$. Indeed,
\begin{eqnarray*}
\bX^T\bY= \frac{((\nu^2-\mu^2)^2+4 \mu\nu (\nu+\mu (\bx^T\by))(\mu+\nu (\bx^T\by)))(\bx^T\by)}
{(\nu^2+\mu^2+2\mu\nu (\bx^T\by))^2} = \bx^T\by .
\end{eqnarray*}
To prove the involutivity of the map, we simply write the map into the matrix form
\begin{eqnarray}\label{pma}
 \left( \bX, \bY\right)=\left(\bx, \by \right)P(\bx,\by),
\end{eqnarray}
where 
\[
P(\bx,\by)=\frac{1}{\nu^2+\mu^2+2\mu\nu (\bx^T\by)}
\left(\begin{array}{cc} (\nu^2-\mu^2)& 2\mu \nu+2\nu^2 
(\bx^T\by)\\
2\nu \mu+2\mu^2 
(\bx^T\by)&\mu^2-\nu^2\end{array} \right)
\]
Notice that $P(\bx,\by)=P(\bX,\bY)$ due to the invariance  $\bx^T\by=\bX^T\bY$.
It can be easily checked that  $P^2(\bx,\by)=I_2$,
which
immediately leads to the involutivity of the map.

It follows from (\ref{pma}) that 
\begin{equation}
 \left( \bx, \by\right)=\left(\bX, \bY
\right)P(\bX,\bY),\label{inP}
\end{equation}
 which compatible to identities (\ref{eq1}) and the one
obtained from the residue
of (\ref{refac}) at $\lambda=i\nu$ and $\lambda=i\mu$, respectively.
Substitution of either (\ref{pma}) or (\ref{inP}) in (\ref{rfeq}) leads to the
identity.

Finally, we show that the factorisation is unique. We know that the Darboux
matrix is invertible and $(M_\nu( \bx))^{-1}=M_\nu^T( \bx)$. Thus if 
$ M_\nu(\bx)M_\mu(\by)M_\kappa(  \bz)=M_\nu(  \bX)M_\mu( \bY)M_\kappa(  \bZ)$,
then
$$M_\mu(\by)M_\kappa(  \bz)=M_\nu^T( \bx)M_\nu(  \bX)M_\mu(  
\bY)M_\kappa(  \bZ).$$
The left hand side is regular at $\lambda=\nu$. The second order pole at
$\lambda=\nu$ in the right hand side vanishes only if 
$\bx^T \bX=1$. Since both $\bx$ and $\bX$ are unit vectors,  we have
$\bx=\bX$. Then the right hand side of the equation is regular at $\lambda=\nu$.
In the same way, we can prove that $\by=\bY$ and $\bz=\bZ$.
According to Theorem \ref{YBo}, the map defined by (\ref{XX}) and (\ref{YY}) is
a Yang-Baxter map.
\end{proof}

\begin{Pro}\label{inv2}
 Yang-Baxter map (\ref{XX}), (\ref{YY}) has a linear vector invariant
 \begin{equation}\label{ybinv}
  \nu \bx+\mu \by=\nu \bX+\mu\bY.
 \end{equation}
\end{Pro}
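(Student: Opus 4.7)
The plan is to derive (\ref{ybinv}) directly from the large-$\lambda$ asymptotics of the re-factorisation identity (\ref{refac}), which is the defining relation of the Yang-Baxter map. The key observation is that the linear invariant should arise from the coefficient of $\lambda^{-1}$ in the Laurent expansion of both sides at $\lambda=\infty$, since each Darboux matrix factor is of the form $I_{n+2}+O(\lambda^{-1})$.

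First I would expand a single Darboux matrix $M_\nu(\bsu)$ around $\lambda=\infty$. Using $\frac{i\nu}{\lambda\mp i\nu}=\pm\frac{i\nu}{\lambda}+O(\lambda^{-2})$ and computing the block form of $Qa\rangle\langle a-a\rangle\langle aQ$ with $\langle a=(i,\bsu^T)$, the off-diagonal blocks combine to give
\begin{equation*}
M_\nu(\bsu)=I_{n+2}+\frac{2\nu}{\lambda}\begin{pmatrix}0&\bsu^T\\-\bsu&0_{n+1}\end{pmatrix}+O(\lambda^{-2}).
\end{equation*}
Next I would substitute this into both sides of $M_\nu(\bx)M_\mu(\by)=M_\mu(\bY)M_\nu(\bX)$. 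Since both factors start with $I_{n+2}$, the coefficient of $\lambda^{-1}$ in the product is simply the sum of the individual $\lambda^{-1}$ coefficients. Equating these on the two sides gives the matrix identity $\nu\bx+\mu\by=\nu\bX+\mu\bY$ from the lower-left block (and the transposed statement from the upper-right), which is exactly (\ref{ybinv}).

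As a sanity check independent of the structural argument, one can substitute (\ref{XX}) and (\ref{YY}) directly into $\nu\bX+\mu\bY$ over the common denominator $\nu^2+\mu^2+2\mu\nu(\bx^T\by)$: the coefficient of $\bx$ in the numerator collects to $\nu^3-\nu\mu^2+2\mu^2\nu+2\mu\nu^2(\bx^T\by)=\nu(\nu^2+\mu^2+2\mu\nu(\bx^T\by))$ and the coefficient of $\by$ collects to $\mu(\nu^2+\mu^2+2\mu\nu(\bx^T\by))$, so both cancel against the denominator yielding $\nu\bx+\mu\by$.

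There is no real obstacle: the proposition is a short computation. The only choice is presentational---the asymptotic-expansion route makes the origin of the invariant transparent (it is the $\lambda^{-1}$ residue at infinity of the refactorisation) and avoids a blind algebraic expansion, whereas the direct substitution gives a self-contained verification in a few lines.
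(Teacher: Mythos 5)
Your proposal is correct and matches the paper's own proof, which likewise offers both routes: direct substitution of (\ref{XX}), (\ref{YY}) into (\ref{ybinv}), and extraction of the coefficient of $\lambda^{-1}$ in the Laurent expansion of (\ref{refac}) at $\lambda=\infty$. Your expansion $M_\nu(\bsu)=I_{n+2}+\tfrac{2\nu}{\lambda}\bigl(\begin{smallmatrix}0&\bsu^T\\-\bsu&0\end{smallmatrix}\bigr)+O(\lambda^{-2})$ and the cancellation of the common denominator in the substitution check are both accurate.
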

\begin{proof}
 The identity  can be proved by a direct substitution of (\ref{XX}) 
and (\ref{YY}) in (\ref{ybinv}). Another way to prove this statement is to 
consider the Laurent expansion of equation (\ref{refac}) in $\lambda$ at 
infinity. Then (\ref{ybinv}) follows immediately from the coefficients at 
$\lambda^{-1}$.
\end{proof}

\subsection{Integrable partial difference system}
Let us consider two Darboux transformations for the vector
sine-Gordon system (\ref{vsG}) corresponding to Darboux matrices $M_\nu(\bx)$
and $M_\mu(\bY)$ with distinct positions of the poles $\mu\ne\pm\nu$. According
to Theorem \ref{th1} the vectors $\bX,\by$ can be expressed in terms of the
original variables $\bsv=(\beta,\bsalpha^T)^T$ as
\begin{equation}
 \label{bXby}
 \bX=\frac{\cS_\nu(\bsv)+\bsv}{|\cS_\nu(\bsv)+\bsv|},\qquad
\by=\frac{\cS_\mu(\bsv)+\bsv}{|\cS_\mu(\bsv)+\bsv|},
\end{equation}
where $\cS_\nu$ and $\cS_\mu$ are the corresponding shift automorphisms. 
The Bianchi permutability condition for these two Darboux transformations 
(\ref{dis}) has the form
\begin{equation}\label{bianchi}
 M_\nu(\cS_\mu(\bX))M_\mu(  \by)=M_\mu(\cS_\nu(\by))M_\nu(  
\bX),
\end{equation}
which coincides with (\ref{refac}) where
\begin{equation}
 \bx=\cS_\mu(\bX)=\frac{\bsv_{1,1}+\bsv_{0,1}}{|\bsv_{1,1}+\bsv_{0,1}|},\quad
\bY=\cS_\nu(\by)=\frac{\bsv_{1,1}+\bsv_{1,0}}{|\bsv_{1,1}+\bsv_{1,0}|}.\label{bxbY}
\end{equation}
Here we used the notation $\cS_{\nu}^n\cS_{\mu}^m \bsv=\bsv_{n,m}$ introduced in Section \ref{sec2} 
and convention $\bsv_{0,0}=\bsv$.
Having made  identifications  (\ref{bXby}),(\ref{bxbY}) and using Theorem 
\ref{thYB} we can show that the Bianchi permutability condition (\ref{bianchi}) 
is equivalent to a single quadrilateral equation  for variables 
$\bsv_{i,j}\in S^n$ on  the two dimensional lattice $\Z^2$. This equation can 
be written in a few equivalent forms. 

\begin{The}\label{DIS}
Let $\bsv\in S^n$, $\nu\neq \mu \in \bbbr$ and 
\begin{eqnarray*}
 \bbf=\frac{(\nu^2-\mu^2) \bx+2\mu(\nu+\mu 
(\bx^T\by))\by}{\nu^2+\mu^2+2\mu\nu (\bx^T\by)} ,\quad 
\bF=\frac{(\nu^2-\mu^2) \bX+2\mu(\nu+\mu 
(\bX^T\bY))\bY}{\nu^2+\mu^2+2\mu\nu (\bX^T\bY)}
\\
\bbg=\frac{(\mu^2-\nu^2)\by+2\nu (\mu+\nu 
(\bx^T\by))\bx}{\nu^2+\mu^2+2\mu\nu (\bx^T\by)}
 ,\quad 
 \bG=\frac{(\mu^2-\nu^2)\bY+2\nu (\mu+\nu 
(\bX^T\bY))\bX}{\nu^2+\mu^2+2\mu\nu (\bX^T\bY)}
 \end{eqnarray*}
where 
 \begin{equation}\label{th5e}
\bx=\frac{\bsv_{1,1}+\bsv_{0,1}}{|\bsv_{1,1}+\bsv_{0,1}|},\qquad  
\by=\frac{\bsv_{0,1}+\bsv}{|\bsv_{0,1}+\bsv|}\qquad 
\bX=\frac{\bsv_{1,0}+\bsv}{|\bsv_{1,0}+\bsv|},\qquad  
\bY=\frac{\bsv_{1,1}+\bsv_{1,0}}{|\bsv_{1,1}+\bsv_{1,0}|}.
 \end{equation}
Then the following equations are equivalent
\[\begin{array}{llll}
\mbox{\rm (a)}&\ \ \dfrac{ \bsv_{1,0}+\bsv}{|\bsv_{1,0}+\bsv|}
=\bbf ,\qquad & \mbox{\rm (b)}&\ \ 
 \dfrac{ \bsv_{1,1}+\bsv_{1,0}}{|\bsv_{1,1}+\bsv_{1,0}|}
 =\bbg,\\   \\ 
\mbox{\rm (c)}&\ \  \dfrac{ \bsv_{0,1}+\bsv}{|\bsv_{0,1}+\bsv|}
 =\bG, \qquad &
\mbox{\rm (d)}&\ \ \dfrac{ \bsv_{1,1}+\bsv_{0,1}}{|\bsv_{1,1}+\bsv_{0,1}|}
=\bF, \\  \\
\mbox{\rm (e)}&\ \ \bsv_{1,0}=-\bsv+2\bbf(\bbf^T 
\bsv),  \qquad &
\mbox{\rm (f)}&\ \ \bsv_{1,0}= -\bsv_{1,1}+2\bbg(\bbg^T \bsv_{1,1}),\\  
\nonumber \\
\mbox{\rm (g)}&\ \ \bsv_{0,1}=-\bsv+2\bG(\bG^T \bsv), \qquad &
\mbox{\rm (h)}&\ \ \bsv_{0,1}=-\bsv_{1,1}+2\bF(\bF^T 
\bsv_{1,1})
  \end{array}
 \]
and each of them is equivalent to the Bianchi permutability condition 
(\ref{bianchi}).
\end{The}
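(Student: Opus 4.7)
The plan is to reduce the whole theorem to two ingredients that have already been established earlier: the explicit Yang--Baxter map formulas of Theorem \ref{thYB} (together with the involutivity and uniqueness of the re-factorisation proved there), and a very simple geometric identity relating the sum of two unit vectors to its normalisation.

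First I would observe that, under the substitutions (\ref{bXby}) and (\ref{bxbY}), the Bianchi permutability condition (\ref{bianchi}) coincides \emph{verbatim} with the re-factorisation problem (\ref{refac}) for the input pair $(\bx,\by)$ and the output pair $(\bX,\bY)$. By Theorem \ref{thYB} the solution of that problem is unique and given by the Yang--Baxter map $Y_{\nu,\mu}(\bx,\by)=(\bbf,\bbg)$ as computed in (\ref{XX}), (\ref{YY}). Hence (\ref{bianchi}) is equivalent to the single pair of identities $\bX=\bbf$ and $\bY=\bbg$, which are exactly items (a) and (b). For items (c) and (d), I would invoke the involutivity $Y_{\nu,\mu}\circ Y_{\nu,\mu}=\id$ proved in Theorem \ref{thYB}: applying $Y_{\nu,\mu}$ to $(\bX,\bY)$ returns $(\bx,\by)$, and since $\bF$ and $\bG$ are obtained from $\bX,\bY$ by the same formulas that produced $\bbf,\bbg$ from $\bx,\by$, this yields $\bx=\bF$ and $\by=\bG$.

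Second, to connect (a)--(d) with (e)--(h), I would isolate the following elementary bisector lemma: for any two unit vectors $\bp,\bq\in S^n$ with $\bp+\bq\neq 0$, setting $\bbf:=(\bp+\bq)/|\bp+\bq|$ one has $\bbf^T\bp=\bbf^T\bq=\tfrac12|\bp+\bq|$, so $\bp+\bq=2(\bbf^T\bp)\bbf$ and consequently $\bq=-\bp+2(\bbf^T\bp)\bbf$. Conversely, if $\bbf\in S^n$ and $\bq=-\bp+2(\bbf^T\bp)\bbf$ with $\bbf^T\bp>0$, then $\bq$ automatically lies in $S^n$ (reflection across the hyperplane orthogonal to $\bbf$ is an isometry) and $(\bp+\bq)/|\bp+\bq|=\bbf$. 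Applying this lemma to the pairs $(\bsv,\bsv_{1,0})$, $(\bsv_{1,1},\bsv_{1,0})$, $(\bsv,\bsv_{0,1})$, $(\bsv_{1,1},\bsv_{0,1})$ gives the bijections (a)$\Leftrightarrow$(e), (b)$\Leftrightarrow$(f), (c)$\Leftrightarrow$(g), (d)$\Leftrightarrow$(h). Combined with the first paragraph, all eight equations are equivalent to (\ref{bianchi}).

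The only subtle point I expect is making the bisector lemma completely airtight when read as a definition of a new lattice variable: in directions such as (e)$\Rightarrow$(a) one must verify that the right-hand side of (e) actually lies in $S^n$ (so that the normalised sum in (a) makes sense) and that the sign $\bbf^T\bsv>0$ is the correct one -- both follow immediately from $\bbf\in S^n$ and from the assumption, already built into the definition of $\bx,\by,\bX,\bY$ in (\ref{th5e}), that the relevant sums of adjacent lattice vectors are nonzero. No further computation is needed: the Yang--Baxter step is free by Theorem \ref{thYB}, and the bisector step is purely linear algebra in $\bbbr^{n+1}$.
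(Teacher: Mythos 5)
Your reduction of the Bianchi condition to the pair $\{\text{(a)},\text{(b)}\}$ via Theorem \ref{thYB}, and your bisector lemma giving (a)$\Leftrightarrow$(e), (b)$\Leftrightarrow$(f), (c)$\Leftrightarrow$(g), (d)$\Leftrightarrow$(h), both match the paper's argument. But there is a genuine gap: what you prove only shows that the \emph{conjunctions} $\{\text{(a)}\wedge\text{(b)}\}$ and $\{\text{(c)}\wedge\text{(d)}\}$ (hence $\{\text{(e)}\wedge\text{(f)}\}$ and $\{\text{(g)}\wedge\text{(h)}\}$) are equivalent to (\ref{bianchi}). The theorem asserts that \emph{each single} equation (a)--(h) is equivalent to (\ref{bianchi}), and for that you must show that (a) alone implies (b), equivalently (e)$\Leftrightarrow$(f). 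This does not follow from uniqueness or involutivity of the Yang--Baxter map: given $\bsv,\bsv_{0,1},\bsv_{1,1}$, equation (e) determines $\bsv_{1,0}$ as $-\bsv+2\bbf(\bbf^T\bsv)$ while equation (f) determines it as $-\bsv_{1,1}+2\bbg(\bbg^T\bsv_{1,1})$, and nothing in your argument forces these two candidate values to coincide. (The vector invariant of Proposition \ref{inv2} would let you deduce $\bY=\bbg$ from $\bX=\bbf$ only if you already knew $\nu\bX+\mu\bY=\nu\bx+\mu\by$; but that is the conservation law (\ref{lcl}), which holds only on solutions of the quad equation, so invoking it here would be circular.)

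The paper closes exactly this gap by verifying the identity (\ref{id0}), namely $-\bsv+2\bbf(\bbf^T\bsv)=-\bsv_{1,1}+2\bbg(\bbg^T\bsv_{1,1})$, which holds identically for all $\bsv,\bsv_{0,1},\bsv_{1,1}\in S^n$; this is done by a direct and somewhat lengthy computation using $x_s=|\bsv_{1,1}+\bsv_{0,1}|$, $y_s=|\bsv_{0,1}+\bsv|$ and the relations (\ref{id1}), (\ref{id2}). Once this identity (and its mirror image for $\bF,\bG$, giving (g)$\Leftrightarrow$(h)) is in place, the chains (\ref{bianchi})$\Leftrightarrow$(a)$\Leftrightarrow$(b)$\Leftrightarrow$(e)$\Leftrightarrow$(f) and (\ref{inP})$\Leftrightarrow$(c)$\Leftrightarrow$(d)$\Leftrightarrow$(g)$\Leftrightarrow$(h) go through, and the two groups are linked essentially as you propose, via (\ref{bianchi})$\Leftrightarrow$(\ref{pma})$\Leftrightarrow$(\ref{inP}). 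You need to supply this computation (or an equivalent argument) for your proof to be complete.
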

\begin{proof}
Let us show that 
equation (a) is equivalent to (e): $\bsv_{1,0}=-\bsv+2\bbf(\bbf^T 
\bsv)$. Indeed, it follows from (a) that 
\[
 2\bbf^T\bsv= 2\frac{ \bsv_{1,0}^T+\bsv^T}{|\bsv_{1,0}+\bsv|}\bsv=2\frac{ 
\bsv_{1,0}^T \bsv+1}{|\bsv_{1,0}+\bsv|}=|\bsv_{1,0}+\bsv|
\]
and in particular $\bbf^T\bsv>0$ and thus (a)$\Rightarrow $(e). Since 
$\bbf^T\bsv>0$ we have  (e)$\Rightarrow $(a):
\[
 \frac{\bsv_{1,0}+\bsv}{|\bsv_{1,0}+\bsv|}=\frac{2\bbf(\bbf^T 
\bsv)}{|2\bbf(\bbf^T \bsv)|}=\bbf.
\]
In a similar way one can show that (b)$\Longleftrightarrow$(f), 
(c)$\Longleftrightarrow$(g) and (d)$\Longleftrightarrow$(h). 
To demonstrate (e)$\Longleftrightarrow$(f) we show that 
\begin{equation}\label{id0}
 -\bsv+2\bbf(\bbf^T \bsv)= -\bsv_{1,1}+2\bbg(\bbg^T \bsv_{1,1})
\end{equation}
by a direct computation. 
Let $|\bsv_{1,1}+\bsv_{0,1}|=x_s$ and $|\bsv_{0,1}+\bsv|=y_s$.
Then 
\begin{equation}\label{id1}
 x_s^2=2 (\bsv_{1,1}^T \bsv_{0,1}+1),\quad y_s^2=2 (\bsv_{0,1}^T 
\bsv+1),\quad 
\bx^T\by=x_s^{-1} y_s^{-1}(\bsv_{1,1}^T \bsv_{0,1}+\bsv_{0,1}^T
\bsv+\bsv_{1,1}^T\bsv+1),
\end{equation}
which implies 
\begin{equation}\label{id2}
 (\bsv_{0,1}^T+\bsv^T)
\bsv_{1,1}=x_s y_s \bx^T\by -\frac{1}{2} 
y_s^2,\quad (\bsv_{1,1}^T+\bsv_{0,1}^T)\bsv=x_s y_s \bx^T\by -\frac{1}{2} 
x_s^2.
\end{equation}
Using (\ref{id1}), (\ref{id2}) we get 
\begin{eqnarray*}
 &&(\nu^2+\mu^2+2\mu\nu (\bx^T\by))^2(\bbf(\bbf^T \bsv)-\bbg(\bbg^T 
\bsv_{1,1}))\\
&&\qquad =\left( (\nu^2-\mu^2)
(\bsv_{1,1}^T+\bsv_{0,1}^T)\bsv x_s^{-1}+\mu(\nu+\mu
(\bx^T\by))y_s\right)
(\nu^2-\mu^2) x_s^{-1} (\bsv_{1,1}+\bsv_{0,1} )\\
&&\qquad- 2\left((\mu^2-\nu^2)(\bsv_{0,1}^T+\bsv^T)
\bsv_{1,1}y_s^{-1}+\nu (\mu+\nu 
(\bx^T\by))x_s\right)
\nu (\mu+\nu 
(\bx^T\by)) x_s^{-1} (\bsv_{1,1}+\bsv_{0,1})
\\
&&\qquad+2 \left( (\nu^2-\mu^2)
(\bsv_{1,1}^T+\bsv_{0,1}^T)\bsv x_s^{-1}+\mu(\nu+\mu
(\bx^T\by))y_s\right)
\mu(\nu+\mu (\bx^T\by))y_s^{-1} (\bsv_{0,1}+\bsv) \\
&&\qquad- \left((\mu^2-\nu^2)(\bsv_{0,1}^T+\bsv^T)
\bsv_{1,1}y_s^{-1}+\nu (\mu+\nu 
(\bx^T\by))x_s\right)
\mu^2-\nu^2) y_s^{-1} (\bsv_{0,1}+\bsv)\\
&&\qquad =\frac{1}{2} (\nu^2+\mu^2+2\mu\nu (\bx^T\by))^2 (\bsv-\bsv_{1,1}).
\end{eqnarray*}
which leads to the required identity (\ref{id0}). Equation  (\ref{refac}) is
equivalent to (\ref{XX}), (\ref{YY}), which after the identification
(\ref{bXby}), (\ref{bxbY}) implies that the Bianchi permutability condition 
(\ref{bianchi}) is equivalent to equations (a) and (b). Thus 
(\ref{bianchi})
$\Longleftrightarrow$(a)$\Longleftrightarrow$(b)$
\Longleftrightarrow$(e)$\Longleftrightarrow$(f). In the same way we can show
that equation (\ref{inP}) 
$\Longleftrightarrow$(c)$\Longleftrightarrow$(d)$\Longleftrightarrow$(g)$
\Longleftrightarrow$(h). Observation that  
(\ref{bianchi})$\Longleftrightarrow$(\ref{pma})$\Longleftrightarrow$(\ref{inP}) 
 completes the proof.
\end{proof}

Thus, the condition (\ref{bianchi}) is equivalent to a quadrilateral equation 
(one take any equation from the list (a)-(h), depending on 
the problem) which is a new vector isotropic integrable system with dependent 
variable $\bsv\in S^n$ on a sphere. It is natural to refer this system as 
{\sl discrete vector sine-Gordon} equation. Let   us take equation (e)
\begin{equation}\label{qua}
 \bsv_{1,0}=-\bsv+2\bbf(\bbf^T 
\bsv)
\end{equation}
as a representative. The Bianchi permutability condition (\ref{bianchi}) for 
two Darboux matrices play the role of a Darboux (or discrete Lax) representation 
for (\ref{qua}).

Using identification (\ref{bXby}), (\ref{bxbY}) we can recast the vector 
 invariant (\ref{ybinv}) of the Yang-Baxter map (Proposition \ref{inv2}) in a 
local conservation law for (\ref{qua}), that is,
\begin{equation}\label{lcl}
 (\cS_\mu-1)\nu\frac{{\boldsymbol
v}_{1,0}+{\boldsymbol v}}{|{\boldsymbol
v}_{1,0}+{\boldsymbol v}|}=(\cS_\nu-1)\mu\frac{{\boldsymbol
v}_{0,1}+{\boldsymbol v}}{|{\boldsymbol
v}_{0,1}+{\boldsymbol v}|}.
\end{equation}
System (\ref{qua}) possess an infinite hierarchy of local
conservation laws, which can be found using formal diagonalisation
\cite{mik12, mik13} of the Darboux matrices or using the master symmetry 
(\ref{vmaster}).
Notice that for  the equation (\ref{qua}) the unit vector $\bsv$ satisfies the 
identity 
\[
 \frac{(\bsv_{1,1}^T+\bsv_{0,1}^T)(\bsv+\bsv_{0,1})}{|\bsv_{1,1 } 
+\bsv_{0,1}||\bsv+\bsv_{0,1}|}=
\frac{(\bsv_{1,1}^T+\bsv_{1,0}^T)(\bsv+\bsv_{1,0})}{|\bsv_{1,1}
+\bsv_{1,0}||\bsv+\bsv_{1,0}|} ,
\]
which follows from the invariant $\bx^T \by=\bX^T\bY$ (Theorem \ref{thYB}).

For the discrete vector sine-Gordon equation there is a well posed initial value 
problem  with initial data
given on the staircase
\[ 
 \{\bsv_{k,k},\bsv_{k,k+1}\,|\, k\in\Z,\ \bsv_{k,k}+\bsv_{k,k+1}\ne 0,\
\ |\bsv_{k,k}|=|\bsv_{k,k+1}|=1\}.
\]
To find the values $\bsv_{i,j}$ below the staircase ($i>j$) one  can use 
equation (e) or (f), for values above the staircase ($j>i+1$) it is convenient 
to use equation (g) or (h) and their shifts (by $\cS_\nu$ and $\cS_\mu$).
Thus the elimination map \cite{mwx1}, which is a useful tool for  study 
symmetries, conservation laws and other structures associated with this 
integrable system can be correctly defined.

The local symmetry (\ref{equV}) of the B\"acklund transformation
(\ref{alphaeqq}) and (\ref{backx}) results in two symmetries
 \begin{eqnarray*}
D_{{\tau_\nu}}{\boldsymbol v}& =&\frac{|\boldsymbol v_{-1,0}+\boldsymbol v|^2 
(\boldsymbol 
v_{1,0}+\boldsymbol v)-|\boldsymbol v_{1,0}+\boldsymbol v|^2
(\boldsymbol v_{-1,0}+\boldsymbol v)}{( \boldsymbol
v_{1,0}^T+\boldsymbol v^T)(
\boldsymbol v_{-1,0}+\boldsymbol v)+|\boldsymbol
v_{-1,0}+\boldsymbol v||\boldsymbol v_{1,0}+\boldsymbol v|},\\
D_{{\tau_\mu}}{\boldsymbol v}& =&\frac{|\boldsymbol v_{0,-1}+\boldsymbol v|^2 
(\boldsymbol 
v_{0,1}+\boldsymbol v)-|\boldsymbol v_{0,1}+\boldsymbol v|^2
(\boldsymbol v_{0,-1}+\boldsymbol v)}{( \boldsymbol
v_{0,1}^T+\boldsymbol v^T)( 
\boldsymbol v_{0,-1}+\boldsymbol v)+|\boldsymbol
v_{0,-1}+\boldsymbol v||\boldsymbol v_{0,1}+\boldsymbol v|}
\end{eqnarray*}
of the discrete vector sine-Gordon equation.

B\"acklund transformations (\ref{alphaeqq}) and (\ref{backx}) give the 
non-local symmetries of the discrete vector sine-Gordon equation as follows:
 \begin{eqnarray}
  &&\begin{array}{rc}
D_t\left(\dfrac{{\boldsymbol v}_{1,0}+{\boldsymbol v}}{|{\boldsymbol
v}_{1,0}+{\boldsymbol v}|}\right)
=\frac{1}{2\nu}({\boldsymbol v}-{\boldsymbol v}_{1,0}); &
D_t\left(\dfrac{{\boldsymbol v}_{0,1}+{\boldsymbol v}}{|{\boldsymbol
v}_{0,1}+{\boldsymbol v}|}\right)
=\frac{1}{2\mu}({\boldsymbol v}-{\boldsymbol v}_{0,1});
                 \end{array}\label{backnmt}\\
&&\begin{array}{rc}
(\cS_\nu-1)\dfrac{D_x\boldsymbol
\alpha}{\beta}=-2 \nu\dfrac{{\boldsymbol
\alpha}_{1,0}+{\boldsymbol \alpha}}{|{\boldsymbol
v}_{1,0}+{\boldsymbol v}|}; &
(\cS_\mu-1)\dfrac{D_x\boldsymbol
\alpha}{\beta}=-2 \mu\dfrac{{\boldsymbol
\alpha}_{0,1}+{\boldsymbol \alpha}}{|{\boldsymbol
v}_{0,1}+{\boldsymbol v}|}.
    \end{array}\label{backnmx}
 \end{eqnarray}

In the case $n=1$, corresponding to the scalar sine-Gordon equation (\ref{ssg}), 
we have
\begin{eqnarray*}
\frac{ \bsv_{1,0}+\bsv}{|\bsv_{1,0}+\bsv|}=\left(\begin{array}{c} \cos \frac{\theta_{1,0}+\theta}{2}\\
\sin \frac{\theta_{1,0}+\theta}{2}\end{array}\right);\qquad
\bx=\left(\begin{array}{c} \cos \frac{\theta_{1,1}+\theta_{0,1}}{2}\\
\sin \frac{\theta_{1,1}+\theta_{0,1}}{2}\end{array}\right),\qquad  
\by=\left(\begin{array}{c} \cos \frac{\theta_{0,1}+\theta}{2}\\
\sin \frac{\theta_{0,1}+\theta}{2}\end{array}\right)  .                                             
\end{eqnarray*}
Substituting them into (\ref{qua}), we get
\begin{eqnarray*}
\left\{\begin{array}{l} \sin \frac{\theta_{1,0}+\theta}{2}=\dfrac{(\nu^2-\mu^2) 
\sin \frac{\theta_{1,1}+\theta_{0,1}}{2}
 +2\mu(\nu+\mu \cos \frac{\theta_{1,1}-\theta}{2})\sin  
\frac{\theta_{0,1}+\theta}{2}}{\nu^2+\mu^2+2\mu\nu \cos 
\frac{\theta_{1,1}-\theta}{2}}\\ \\
\cos \frac{\theta_{1,0}+\theta}{2}=\dfrac{(\nu^2-\mu^2) \cos 
\frac{\theta_{1,1}+\theta_{0,1}}{2}
 +2\mu(\nu+\mu \cos \frac{\theta_{1,1}-\theta}{2})\cos \frac{\theta_{0,1}+\theta}{2}}{\nu^2+\mu^2+2\mu\nu \cos \frac{\theta_{1,1}-\theta}{2}}
 \end{array}\right.
\end{eqnarray*}
Using trigonometric identities, we can rewrite it as
\begin{eqnarray*}
\left\{\!\!\!\!\begin{array}{l} 
\left(\!\nu \cos \frac{\theta_{1,1}+\theta_{0,1}+\theta_{1,0}+\theta}{4}+\mu\cos \frac{\theta_{0,1}+\theta_{1,0}+3\theta-\theta_{1,1}}{4}\!\right) 
\!\!\left(\!\nu \sin \frac{\theta_{1,1}+\theta_{0,1}-\theta_{1,0}-\theta}{4}
 +\mu \sin 
\frac{\theta_{0,1}+\theta-\theta_{1,1}-\theta_{1,0}}{4}\!\right)=0;\\ \\
\left(\!\nu \sin \frac{\theta_{1,1}+\theta_{0,1}+\theta_{1,0}+\theta}{4}+\mu \sin \frac{\theta_{0,1}+\theta_{1,0}+3\theta-\theta_{1,1}}{4}\!\right) 
\!\!\left(\!\nu \sin \frac{\theta_{1,1}+\theta_{0,1}-\theta_{1,0}-\theta}{4}
 +\mu \sin \frac{\theta_{0,1}+\theta-\theta_{1,1}-\theta_{1,0}}{4}\!\right)=0,
 \end{array}\right.
\end{eqnarray*}
Since $\nu\neq \pm \mu$, it follows that
\begin{eqnarray*}
\nu \sin \frac{\theta_{1,1}+\theta_{0,1}-\theta_{1,0}-\theta}{4}= \mu \sin \frac{\theta_{1,1}+\theta_{1,0}-\theta_{0,1}-\theta}{4} ,
\end{eqnarray*}
which is the classical discrete scalar sine-Gordon equation known as the 
Bianchi--Hirota equation.

\section*{Appendix}
Here we give a proof of the lemma which we use in the proof of Proposition
\ref{Aform}.
\begin{Lem}\label{AA}
Let a matrix $A\in {\rm Mat}_{n+2}(\bbbc)$ satisfy
\begin{eqnarray}
&&A A^T=A^TA=0;\label{AAt}\\
&&A^*=QAQ,\label{AA*}
\end{eqnarray}
where $Q=\mbox{diag}(-1,1,\ldots,1)$,
then it can be represented as
\begin{eqnarray*}
 A=h \left(\begin{array}{c}-i \\ {\bf m}\end{array}\right)
\left(\begin{array}{cc}i & {\bf n}^T\end{array}\right),\qquad
0\neq h\in \bbbr,\quad  {\bf m}, {\bf n}\in \bbbr^{n+1},\quad |{\bf m}|= |{\bf
n}|=1 .
\end{eqnarray*}
\end{Lem}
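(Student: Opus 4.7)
\textbf{Proof plan for Lemma \ref{AA}.} The plan is to use the reality condition $A^* = QAQ$ first to force $A$ into a specific $2\times 2$ block form, then to use the isotropy conditions $AA^T = A^TA = 0$ to show that the lower-right block has rank one, and finally to read off the claimed rank-one factorisation.

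First I would decompose $A = P + iR$ with $P, R \in \operatorname{Mat}_{n+2}(\bbbr)$. Separating real and imaginary parts in $A^* = QAQ$ gives the two conditions $P = QPQ$ and $R = -QRQ$. Since $Q = \operatorname{diag}(-1,1,\ldots,1)$, a componentwise check shows that $P$ must be block diagonal and $R$ must be off-block-diagonal, yielding
\[
A = \begin{pmatrix} p & ir^T \\ is & P' \end{pmatrix},\qquad p \in \bbbr,\quad r,\, s \in \bbbr^{n+1},\quad P' \in \operatorname{Mat}_{n+1}(\bbbr).
\]
Substituting this block form into $AA^T = 0$ and $A^TA = 0$ and comparing blocks, I would extract the algebraic constraints: from the $(1,1)$ scalar entries, $p^2 = |r|^2 = |s|^2$; from the $(2,2)$ blocks, $P'P'^T = ss^T$ and $P'^T P' = rr^T$; and from the off-diagonal blocks, $ps + P'r = 0$ and $pr + P'^T s = 0$.

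The heart of the argument is a rank-reduction step. In the setting of Proposition \ref{Aform} we have $A\ne 0$, which together with the constraints forces $s \ne 0$, and then $P'P'^T = ss^T$ is a rank-one positive-semidefinite real matrix. Since $\operatorname{rank}(P') = \operatorname{rank}(P'P'^T)$ for real $P'$, the matrix $P'$ has rank one with image spanned by $s$, so $P' = s\,\mathbf{n}^T$ for some unit vector $\mathbf{n} \in \bbbr^{n+1}$. Plugging this into $P'^T P' = rr^T$ gives $r = \epsilon|s|\,\mathbf{n}$ for some $\epsilon \in \{\pm 1\}$, and the linear constraint $ps + P'r = 0$ then forces $p = -\epsilon|s|$; the remaining linear constraint $pr + P'^T s = 0$ is automatically satisfied. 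Setting $h = -\epsilon|s| \in \bbbr \setminus \{0\}$ and $\mathbf{m} = -\epsilon s/|s|$, both $\mathbf{m}$ and $\mathbf{n}$ are unit real vectors, and a direct computation using the identity $\begin{pmatrix}-i\\\mathbf{m}\end{pmatrix}(i,\,\mathbf{n}^T) = \begin{pmatrix} 1 & -i\mathbf{n}^T\\ i\mathbf{m} & \mathbf{m}\mathbf{n}^T\end{pmatrix}$ verifies the required factorisation.

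The main obstacle is sign bookkeeping: the decomposition $P' = s\,\mathbf{n}^T$ is only unique up to a joint sign flip of $(s,\mathbf{n})$, and the sign $\epsilon$ in $r = \pm|s|\,\mathbf{n}$ is a priori independent of this choice. One must verify that all of the sign choices appearing in $p$, $r$, $\mathbf{m}$, and $\mathbf{n}$ can be threaded consistently through a single $\epsilon = \pm 1$, so that the two linear relations and both reality/norm identities hold simultaneously. The computation is routine but error-prone, and it is the one place where care is needed before concluding the proof.
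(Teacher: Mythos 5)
Your proposal is correct and follows essentially the same route as the paper's proof in the Appendix: the block form forced by the reality condition $A^*=QAQ$, the scalar, bilinear and linear relations extracted from $AA^T=A^TA=0$, and the rank-one argument $\operatorname{rank}(H)=\operatorname{rank}(HH^T)=\operatorname{rank}({\bf f}{\bf f}^T)=1$ for the lower-right block. The only difference is cosmetic: the paper writes $H=\delta\,{\bf f}{\bf e}^T$, determines $\delta=h^{-1}$ from the linear relation, and then sets ${\bf n}={\bf e}/h$, ${\bf m}={\bf f}/h$, which absorbs into the (possibly negative) scalar $h$ all of the sign bookkeeping that your version tracks through the auxiliary parameter $\epsilon$.
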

\begin{proof}
It follows from (\ref{AA*}) that the form of matrix $A$ is
\begin{eqnarray}\label{matA}
 A=\left(\begin{array}{cc} h& -i {\bf e}^T\\i {\bf f}&H\end{array}\right),
 \quad h\in \bbbr,\quad {\bf e}, {\bf f}\in \bbbr^{n+1},\quad H\in
{\rm Mat}_{n+1}(\bbbr).
\end{eqnarray}
It follows from (\ref{AAt}) that
\begin{eqnarray}
&& h^2={\bf e}^T {\bf e}={\bf f}^T {\bf f};\label{h}\\
&&H H^T={\bf f}\ {\bf f}^T; \quad H^TH={\bf e}\ {\bf e}^T;\label{H}\\
&& h {\bf f}-H {\bf e}=h {\bf e}-H^T {\bf f}=0.\label{mix}
\end{eqnarray}
If $h=0$ then  $A=0$. Now we assume
that $h\ne0$ and thus ${\bf e\ne 0}$,  ${\bf f\ne 0}$.
From (\ref{H}), we obtain that 
${\rm rank}(H)={\rm rank}(H H^T)={\rm rank} ({\bf f}\, {\bf f}^T)=1 $ and $H$
can be
represented as 
$
 H=\delta {\bf f} {\bf e}^T,\ \delta\in\bbbr.
$
From (\ref{mix}) we deduce that $\delta=h^{-1}$. Thus 
$$ A=h \left(\begin{array}{cc} 1& -i h^{-1}{\bf e}^T\\i h^{-1} {\bf
f}&h^{-2}{\bf f e}^T\end{array}\right)=
h \left(\begin{array}{c}-i \\ {\bf m}\end{array}\right)
\left(\begin{array}{cc}i & {\bf n}^T\end{array}\right)
$$
where ${\bf n}={\bf e}/h,\ {\bf m}={\bf f}/h$ and $|{\bf n}|=|{\bf m}|=1$ due to
(\ref{h}).
\end{proof}

\section*{Acknowledgements}
The paper is supported by the Leverhulme Trust, EPSRC grant EP/I038675/1
and  EPSRC grant EP/I038659/1. 
All authors gratefully acknowledge the financial support. 

\end{document}